\newcommand{\RN}[1]{%
  \textup{\uppercase\expandafter{\romannumeral#1}}%
}
\DeclareOldFontCommand{\rm}{\normalfont\rmfamily}{\mathrm}
\DeclareOldFontCommand{\sf}{\normalfont\sffamily}{\mathsf}
\DeclareOldFontCommand{\tt}{\normalfont\ttfamily}{\mathtt}
\DeclareOldFontCommand{\bf}{\normalfont\bfseries}{\mathbf}
\DeclareOldFontCommand{\it}{\normalfont\itshape}{\mathit}
\DeclareOldFontCommand{\sl}{\normalfont\slshape}{\@nomath\sl}
\DeclareOldFontCommand{\sc}{\normalfont\scshape}{\@nomath\sc}
\DeclareRobustCommand*\cal{\@fontswitch\relax\mathcal}
\DeclareRobustCommand*\mit{\@fontswitch\relax\mathnormal}
\colorlet{keywordcolor}{blue!50!black}
\colorlet{commentcolor}{green!60!black}
\colorlet{typecolor}{violet}
\newcommand{\sourcefont}{\ttfamily\small}
\newcommand{\commentfont}{\slshape\rmfamily\color{commentcolor}}
\lstdefinelanguage{ABS}{
        keywords={do,assert,this,new,data,type,def,case,of,local,class,interface,
        extends,implements,if,then,else,await,get,Fut,return,skip,while,module,
        import,export,from,to,suspend,delta,adds,modifies,removes,original,productline,
        features,core,corefeatures,optionalfeatures,after,when,product,hasAttribute,
        hasMethod,hasField,hasInterface,uses,root,extension,group,allof,oneof,require,
        stateupdate,objectupdate,classupdate,fi,
        exclude,original,ifin,ifout,opt,null,
        newgroup,data,thiscomp,in,joins,leaves,subtypeOf,wait,acquire,except,as,component,Pre,Abs
        },
        keywordstyle=\color{keywordcolor}\bfseries\sffamily,
        morekeywords=[2]{Unit, Int, Bool, Rat, List, Set, Pair, Fut, Maybe, String, Triple, Either, Map},
        keywordstyle=[2]\color{typecolor},
        sensitive=true,
        comment=[l]{//},
        morecomment=[s]{/*}{*/},
        morestring=[b]"
}
\lstdefinelanguage[v9]{Java}[]{Java}{
        morekeywords={module,requires,provides,uses,with,to,exports}
}
\lstdefinelanguage[ContextJ]{Java}[]{Java}{
        morekeywords={layer,with,without,proceed,before,after}
}
\lstdefinelanguage[FOP]{Java}[]{Java}{
        morekeywords={refines,original,Super}
}
\lstdefinelanguage[JastAdd]{Java}[]{Java}{
        morekeywords={aspect,syn,inh,lazy}
}
\lstdefinestyle{code}{
        basicstyle=\sourcefont\upshape,
        keywordstyle=\color{keywordcolor}\bfseries\sffamily,
        commentstyle=\commentfont,
        columns=fullflexible,
        mathescape=true,
        escapechar={\#},
        keepspaces=true,
        showstringspaces=false,
        aboveskip=8pt, 
        numbers=left,
        stepnumber=1, 
        numberstyle=\ttfamily\scriptsize\color{gray},
        numbersep=4pt,
        xleftmargin=1.5em,
        xrightmargin=1.5em,
        framexleftmargin=1.2em,
        framexrightmargin=1em,
        framextopmargin=0.5ex,
        breaklines=true,
        breakindent=3pt,
}
\lstdefinestyle{abs}{
        style=code,
        language=ABS,
}
\lstdefinestyle{java}{
        style=code,
            language=Java
}
\lstdefinestyle{java9}{
        style=code,
            language=[v9]Java
}
\lstdefinestyle{aspectj}{
        style=code,
        language=[AspectJ]Java
}
\lstdefinestyle{jastadd}{
        style=code,
        language=[JastAdd]Java
}
\lstdefinestyle{contextj}{
        style=code,
        language=[ContextJ]Java
}
\lstdefinestyle{FOP}{
        style=code,
        language=[FOP]Java
}
\lstdefinestyle{scala}{
        style=code,
        language=Scala,
        morekeywords={self}
}
\newcommand{\code}[2][]{\lstinline[style=code,basicstyle=\ttfamily\upshape,#1]{#2}}
\newcommand{\abs}[2][]{\code[style=abs,#1]{#2}}
\newcommand{\many}[1]{\overline{#1}}
\newcommand{\event}{\ensuremath{\mathsf{ev}}\xspace}
\newcommand{\coreactor}{\ensuremath{\mathsf{Async}}\xspace}
\newcommand{\future}{\ensuremath{f}}
\newcommand{\modal}[1]{\ensuremath{[#1]}}
\newtheoremstyle{mystyle}
  {\topsep} 
  {\topsep} 
  {} 
  {} 
  {\bfseries} 
  {\newline} 
  {.5em} 
  {} 
\theoremstyle{mystyle}
\def\moverlay{\mathpalette\mov@rlay}
\def\mov@rlay#1#2{\leavevmode\vtop{%
   \baselineskip\z@skip \lineskiplimit-\maxdimen
   \ialign*{\hfil$\m@th#1##$\hfil\cr#2\crcr}}}
\newcommand{\charfusion}[3][\mathord]{
    #1{\ifx#1\mathop\vphantom{#2}\fi
        \mathpalette\mov@rlay{#2\cr#3}
      }
    \ifx#1\mathop\expandafter\displaylimits\fi}
\let\temp\phi
\let\phi\varphi
\let\varphi\temp
\newcommand{\xRightarrow}[2][]{\ext@arrow 0359\Rightarrowfill@{#1}{#2}}
\newcommand{\dpp}{\ensuremath{\!:\!}\xspace}
\newcommand{\project}[3]{#1\! \upharpoonright_{#2} #3 \xspace }
\newcommand{\projectphi}[2]{#1@#2}
\newcommand{\ROLE}[1]{\ensuremath{\mathsf{#1}}}
\newcommand{\GInterOn}[4]{\ROLE{#1}\! \xrightarrow{#2} \!\ROLE{#3} \dpp #4}
\newcommand{\glGet}[2]{\ensuremath{\ROLE{#1} \!\uparrow\! #2}\xspace}
\newcommand{\Put}[1]{\ensuremath{\mathsf{Put}}\xspace}
\newcommand{\PutAs}[1]{\ensuremath{\Put~~#1}}
\newcommand{\ReadAs}[1]{\mathsf{Read} \ #1}
\newcommand{\Prcl}{\mathbf{G}}
\newcommand{\G}{\ensuremath{\mathit{G}}\xspace} 
\newcommand{\LPrcl}{\mathbf{L}}
\newcommand{\T}{\ensuremath{\mathit{L}}\xspace} 
\newcommand{\invocev}{\ensuremath{\mathsf{iEv}}\xspace}
\newcommand{\invocrev}{\ensuremath{\mathsf{iREv}}\xspace}
\newcommand{\resolvev}{\ensuremath{\mathsf{fEv}}\xspace}
\newcommand{\resolvrev}{\ensuremath{\mathsf{fREv}}\xspace}
\newcommand{\noev}{\ensuremath{\mathsf{noEv}}\xspace}
\newcommand{\ev}{\ensuremath{\mathit{ev}}\xspace}
\newcommand{\histype}{\ensuremath{\mathbb{C}}\xspace}
\newcommand{\act}{\ensuremath{\mathsf{ac}}\xspace}
\newcommand{\actC}{\ensuremath{\mathsf{A}}\xspace}
\newcommand{\prop}{\ensuremath{\mathsf{prp}}\xspace}
\newcommand{\cased}[1]{\ensuremath{
\left\{\begin{array}{ll}
#1
\end{array}\right.\\
}\xspace}
\newcommand{\casedin}[1]{\ensuremath{
\left\{\begin{array}{ll}
#1
\end{array}\right.
}\xspace}
\newcommand{\casedsclosed}[1]{\ensuremath{
\left\{\begin{array}{l}
#1
\end{array}\right\}\\
}\xspace}
\newcommand{\nnn}{\ensuremath{~.~}\xspace}
\newcommand{\self}{\ensuremath{\mathbf{self}}\xspace}
\newcommand{\result}{\ensuremath{\mathbf{result}}\xspace}
\newcommand{\ADL}{\ensuremath{\mathsf{ADL}}\xspace}
\newcommand{\obs}{\ensuremath{\mathsf{O}}\xspace}
\newcommand{\COMMENT}[1]{}
\newcommand{\EK}[1]{{\color{black}{#1}}}
\newcommand{\kw}[1]{\mathsf{#1}}
\newcommand{\var}{\xabs{x}}
\newcommand{\term}{\mathsf{t}}
\newcommand{\sep}{  \ | \ }
\newcommand{\actor}{\objname}
\newcommand{\method}{\mathsf{M}}
\newcommand{\field}{\mathsf{fl}}
\newcommand{\methodfun}[1]{\mathsf{#1}}
\newcommand{\xabs}[1]{\text{\abs{#1}}}
\newcommand{\receiveTyped}[2]{ ? #1\obligation{#2}}
\newcommand{\receiveUntyped}[1]{ ? #1}
\newcommand{\sendTyped}[4]{\ROLE{#1} !_{#2} #3\obligation{#4}}
\newcommand{\sendUntyped}[3]{\ROLE{#1} !_{#2} #3}
\newcommand{\kend}{\kw{End}}
\newcommand{\skipT}{\ensuremath{\mathsf{skip}}\xspace}
\newcommand{\offer}{\&}
\newcommand{\select}{\oplus}
\newcommand{\rulename}[1]{\textbf{\scriptsize(\textsf{#1})}}
\newcommand{\INFER}[3]{\begin{array}{c}\rulename{#1} \\[1mm] \frac{\begin{array}{c}\displaystyle{#2}
\\[0.5mm]
\end{array}
}{
\begin{array}{c}
\\[-3.5mm]
\displaystyle{#3}
\end{array}
}\end{array}}
\newcommand{\TINFER}[3]{\begin{array}{c}\rulename{#1}  \frac{\begin{array}{c}\displaystyle{#2}
\\[0.5mm]
\end{array}
}{
\begin{array}{c}
\\[-3.5mm]
\displaystyle{#3}
\end{array}
}\end{array}}
\newcommand{\hastype}{:}
\newcommand{\has}{\rhd}
\newcommand{\proves}{\vdash}
\newcommand{\caus}{\mathbb{G}} 
\newcommand{\mainO}{\mathbf{main}}
\newcommand{\getABS}{\text{\abs{get}}}
\newcommand{\ifABS}{\text{\abs{if}}\ }
\newcommand{\elseABS}{\text{\abs{else}}\ }
\newcommand{\thenABS}{\text{\abs{then}} \ }
\newcommand{\fiABS}{\text{\abs{fi}}}
\newcommand{\whileABS}{\text{\abs{while}}\ }
\newcommand{\returnABS}{\text{\abs{return}}}
\newcommand{\expression}{e}
\newcommand{\statement}{\text{\abs{s}} }
\newcommand{\predicate}{\mathsf{p}}
\newcommand{\logicfunction}{\mathsf{f}}
\newcommand\reallywidehat[1]{%
\savestack{\tmpbox}{\stretchto{%
  \scaleto{%
      \scalerel*[\widthof{\ensuremath{#1}}]{\kern-.6pt\bigwedge\kern-.6pt}%
          {\rule[-\textheight/2]{1ex}{\textheight}}
            }{\textheight}%
}{0.5ex}}%
\stackon[1pt]{#1}{\tmpbox}%
}
\newcommand{\postConditions}{\ensuremath{\mathsf{Post}}\xspace}
\newcommand{\osynt}[1]{\mathsf{#1}}
\newcommand{\synt}[1]{\xabs{#1}}
\newcommand{\obligation}[1]{\langle#1\rangle}
\newcommand{\methodname}{\ensuremath{\synt{m}}\xspace}
\newcommand{\type}{\ensuremath{\synt{T}}\xspace}
\newcommand{\objname}{\ensuremath{\osynt{X}}\xspace}
\newcommand{\prgm}{\ensuremath{\osynt{Prgm}}\xspace}
\newcommand{\objlang}{\ensuremath{\osynt{O}}\xspace}
\newcommand{\expr}{\ensuremath{\mathsf{e}}\xspace}
\newcommand{\transition}{\ensuremath{\rightarrow}\xspace}
\newcommand{\initial}{\ensuremath{\mathbb{I}}\xspace}
\newcommand{\lstore}{\ensuremath{\sigma}\xspace}
\newcommand{\store}{\ensuremath{\rho}\xspace}
\newcommand{\task}[1]{\ensuremath{\mathbf{prc}(#1)}}
\newcommand{\object}[1]{\ensuremath{\mathbf{ob}(#1)}}
\newcommand{\config}{\ensuremath{\osynt{C}}\xspace}
\newcommand{\trace}{\ensuremath{\mathbf{tr}}\xspace}
\newcommand{\eval}[1]{\ensuremath{ \llbracket #1 \rrbracket_{\sigma,\rho} \xspace}}
\newcommand{\retvalue}{\ensuremath{\mathsf{val}}}
\newcommand{\lastput}{\ensuremath{\mathsf{lastTerm}}}
\newcommand{\firstput}{\ensuremath{\mathsf{firstTerm}}}
\newcommand{\isput}{\ensuremath{\mathsf{res}}}
\title{Stateful Behavioral Types for ABS}
\author{Eduard Kamburjan and Tzu-Chun Chen}
\institute{Department of Computer Science, Technische Universit{\"a}t Darmstadt, Germany\\
  \email{\texttt{kamburjan@cs.tu-darmstadt.de}, \texttt{tc.chen@dsp.tu-darmstadt.de}}}
\begin{document}
\maketitle
\begin{abstract}
It is notoriously hard to correctly implement a multiparty protocol
which involves asynchronous/concurrent interactions
and constraints on states of multiple participants.
To assist developers in implementing such protocols, 
we propose a novel specification language 
to specify interactions within multiple object-oriented actors and the side-effects on heap memory of those actors. 
A behavioral-type-based analysis is presented for type checking.
Our specification language formalizes a protocol as a 
\textit{global} type, 
which describes the procedure of asynchronous method calls, 
the usage of \emph{futures}, 
and the heap side-effects with a first-order logic.
To characterize runs of instances of types, we give a model-theoretic semantics for types and translate them into logical constraints over traces.
We prove protocol adherence: 
If a program is well-typed w.r.t. a protocol, then every trace of the program adheres to the protocol, i.e., every trace is a model for the formula of the protocol's type.
\end{abstract}

\thispagestyle{plain}

\section{Introduction}\label{sec:intro}
The combination of actors~\cite{actor} with futures~\cite{Baker:1977:IGC:800228.806932} in object-oriented languages (e.g., Scala~\cite{ScalaLang} and ABS~\cite{abs}),
sometimes called \emph{Active Objects}~\cite{Active}, is an active research area for system models
and is frequently used in practice~\cite{scalamix}.
Processes of Active Objects 
communicate internally within an object via the object's heap memory.
External communication works 
via asynchronous method calls
with futures: constructs for synchronizing executions invoked by those calls.
Encapsulated heap memory and explicit synchronization points  
make it easy to locally reason about Active Objects, but hard to specify and verify \emph{global} protocols.

The main obstacle is to bridge the gap between local perspectives of single objects and global perspectives of the whole system.
As Din and Owe~\cite{Din14} pointed out, it is non-trivial to precisely specify the communication within an object's heap memory from a global perspective~\cite{noc}.
Multiparty session types (short as MPST)~\cite{Honda1}, one important member of behavioral types~\cite{Ancona:2016:BTP:3052281,GayVWY17},
are established theories for typing \textit{globally} 
\textit{stateless} concurrent interactions (i.e., method calls) 
among multiple participants (i.e., objects) to 
ensure communication safety. 
Current works in MPST~\cite{contract1,TONINHO201761}
have attempted to specify state in communication
by using global values and assuming channels as the only communication concept.
Global values are not 
sufficient to specify the non-trivial interplay of processes when taking the heap memory inside of an object into account.
Furthermore, channels are not able to fully represent the usage of futures, 
because futures, unlike channels, could expose some inner state of their object.
Namely, it exposes that the computing process has terminated and the object was inactive before and after.


We integrate the stateful analysis and specification of traces of Din et al.~\cite{Din14} into MPST,
where local verification of the endpoints compositionally guarantees the global specification of the whole system.
Functional properties are specified as a part of the communication pattern.
We ensure 
that from the perspective of each actor, its trace is not distinguishable from the global specification 
and
that the whole system is deadlock free. 

We specify passed data and modifications of heap memory 
with first-order logic (FOL) formulas 
and transform behavioral types into logical constraints on traces.
Moreover, from the model-theoretic perspective, we define \emph{protocol adherence}
as the property that every generated trace of a well-typed($\vdash$) program is a model($\models$) for the translation of the type.
The running example below illustrates the challenges for protocols in Active Objects.
\begin{floatingfigure}[l]{0.3\textwidth}
\centering\vspace{-2mm}
\includegraphics{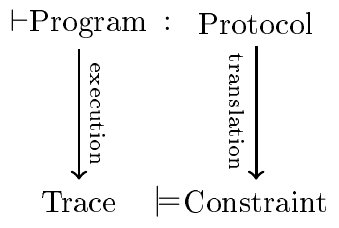}\hspace{-2mm}
\vspace{-4mm}
\end{floatingfigure}
Consider a GUI \abs{U}, a computation server \abs{S}, and an interface server \abs{I}
such that \abs{U}, without knowing \abs{S}, wants to compute some data by sending it to \abs{I} via a method call.
After executing this call, \abs{U} prepares for the next action by setting field $\xabs{intern}$ to value $\xabs{expect}$ and terminating its process to stay responsive.
\abs{I} delegates \abs{U}'s task to \abs{S} and remains responsive to other requests without waiting for \abs{S}'s computation
by invoking another method on \abs{U} with future \var{}, which will carry the computation result, back to \abs{U}.
The code and figure below implement this scenario:

\noindent\begin{minipage}{0.45\textwidth}
\begin{abscode}[basicstyle=\scriptsize]
object U{ 
  TState intern = init; 
  Int resume(Fut#$\langle$#Int#$\rangle$# x){ 
    if( this.intern!=expect } return -1;
    Int r = x.get; return r; }
  Unit start(Int j){
    Fut#$\langle$#Unit#$\rangle$# f = I!cmp(j);
    this.intern = expect; }}
\end{abscode}
\end{minipage}
\begin{minipage}{0.5\textwidth}
\begin{abscode}[basicstyle=\scriptsize, firstnumber=9]
object I{
 Unit cmp(Int dat){
   Fut#$\langle$#Int#$\rangle$# f = S!cmp(dat);
   Fut#$\langle$#Int#$\rangle$# f' = U!resume(f);}}

object S{ Int cmp(Int i){ ... }}

main { U!start(20); }
\end{abscode}
\end{minipage}
\begin{floatingfigure}[r]{0.28\textwidth}
\hspace{-10mm}
\begin{tikzpicture}[scale=0.7, every node/.style={transform shape}]

    \node [copy shadow={draw=black!30,fill=black!30,shadow xshift=0.25ex,
        shadow yshift=-0.25ex},fill=white,draw=black,thick,font=\bfseries]
        at (-1,5.5) {\large\abs{U}};

    \node [copy shadow={draw=black!30,fill=black!30,shadow xshift=0.25ex,
        shadow yshift=-0.25ex},fill=white,draw=black,thick,font=\bfseries]
        at (1,5.5) {\large\abs{I}};

    \node [copy shadow={draw=black!30,fill=black!30,shadow xshift=0.25ex,
        shadow yshift=-0.25ex},fill=white,draw=black,thick,font=\bfseries]
        at (3,5.5) {\large\abs{S}};

    \draw[dotted] (-1,2.5) -- (-1,5.2);
    \draw[dotted] (1,2.5) -- (1,5.2);
    \draw[dotted] (3,2.5) -- (3,5.2);

    \filldraw[draw=black, fill=black!50] (-0.75,4) rectangle (-1.25,5);

    \filldraw[draw=black, fill=black!50] (-0.75,3.5) rectangle (-1.25,2.5);

    \filldraw[draw=black, fill=black!50] (0.75,3.5) rectangle (1.25,4.5);

    \filldraw[draw=black, fill=black!50] (2.75,4) rectangle (3.25,3.25);

    \draw[->] (-1.75,5) -- (-1.25,5);
    {\draw[dashed,->] (2.75,3.25) -- (-0.75,2.75) node[draw=none,fill=none,font=\small,midway,below] {\large{(\abs{get})}};}

    {\draw[->] (-0.75,4.75) -- (0.75,4.5) node[draw=none,fill=none,midway,below] {\large\texttt{cmp}};}
    {\draw[->] (1.25,4.25) -- (2.75,4) node[draw=none,fill=none,midway,below] {\large\texttt{cmp}};}
    {\draw[->] (0.75,4) -- (-0.75,3.5) node[draw=none,fill=none,midway,below,inner sep=10pt] {\large\texttt{resume}};}
\end{tikzpicture}
\vspace{-2mm}
\end{floatingfigure}

In the code, \xabs{!} denotes a non-blocking call, \abs{I!cmp} calls method \abs{cmp} of \abs{I}, 
\abs{U!start} calls \abs{U.start}, \abs{U!resume} calls method \abs{resume} for continuation, 
and \abs{S!cmp} starts the actual computation at \abs{S}.
The challenge 
for formal specifications
is to express 
 that 
 (1) \abs{I} is transparent to \abs{U} and \abs{S}
 such that 
 \abs{I} must pass the same data to \abs{S} that it received from \abs{U}, and
 \abs{I} does not read the return value from \abs{S};
 and (2) \abs{U} changes its heap to \abs{expect} and reads the correct future.

\noindent
\textit{Contributions.} 
We propose (1) a specification language for actors' behaviors, that integrates FOL to specify heap memory, 
(2) model-theoretic semantics for protocol adherence, and
(3) a static type system integrating a FOL validity calculus, which guarantees protocol adherence and deadlock freedom.

\noindent
\textit{Roadmap.} Section~\ref{sec:concept} provides an overview of our approach. 
Section~\ref{sec:lang} introduces a core language for Active Objects, \coreactor, 
and its dynamic logic, Section~\ref{sec:types} gives the types and operations on them and
Section~\ref{sec:type} gives the type system. 
Section~\ref{sec:rep} extends the concept to repetition. 
Section~\ref{sec:conc} concludes and discusses related work.

\section{Scope, Challenges and an Overview of the Workflow}\label{sec:concept}

\EK{
    We aim to specify and verify \emph{session-based} systems. A session-based system is a system which has a fixed, finite set of participating objects. Each object has an assigned \emph{role} within the protocol of a session.
    Our analysis is fully static and is aimed at \emph{system validation}: Ensuring that an existing system follows a certain specification.
}

We consider object-oriented actors, which use method calls, futures, and heap memory for communication.
Every method call is asynchronous and starts a new process at the callee object.
At each such call, the active \emph{caller}
obtains a \emph{fresh} future identity, on which one may synchronize on the termination of the started process. 
An object may only switch its active process to another process if the currently active process terminates. 
The usage of futures provides programmers with the control of \textit{when} synchronize -- however, combining futures with object-oriented actors leads to the following complications:
\begin{description}
\EK{\item[Protocols with State]
In an object-oriented setting, one must take the heap memory into account when reasoning about concurrent computations. 
For one, the heap memory influences the behavior of objects. 
For another, changes of the heap memory (among coordinated actors) are not only a by-effect of communication but often the \emph{aim} of a protocol.
Actors enforce strong encapsulation and restrict communication between object to asynchronous method calls and future reads -- coordinated memory changes must be part of the specification. 
}
\item[Unexposed State] 
In the Active Object concurrency model, each process has exactly one future. Thus reading from a future is synchronizing with an unknown process \emph{and depends on the state of the process's object}. 
To avoid deadlocks, futures cannot be analyzed in isolation --- reading from a future must take the unexposed state of the object into account.
\item[Mixed Communication Paradigms] 
\EK{Processes inside an object communicate through the heap memory. This kind of communication} is hard to describe with data types, as it requires fine-grained specification of computation and has no explicit caller or callee. 
Thus, it is difficult to isolate the parts of the program which realize the communication protocol.
Furthermore, method calls are asynchronous, while future reads are synchronous.
\item[Two-Fold Endpoints] 
In the Active Object model, the callee endpoints of methods calls are \emph{objects}, but the caller endpoints and the endpoints for future synchronization are \emph{processes}. 
The interplay of multiple objects, which contain multiple processes, 
must be captured in the analysis by a two-fold notion of endpoints such that objects and processes are both endpoints. 
\end{description}
In the following, we use the example from Section~\ref{sec:intro} to show how our approach works and addresses these issues.
\paragraph{Example 1: Specifying global types.}\setcounter{example}{1}\label{ex:introex}
\vspace{-2mm}
Our specification language for side-effects is a FOL for specifying \textit{local} memory instead of global values
since (1) global values are not natural in an Active Object setting, 
and (2) a logic over memory locations (variables and fields) allows us to use a well-established theory of first order dynamic logic~\cite{Harel:1979:FDL:539855} to capture the semantics of methods. 
We formalize the scenario in Section~\ref{sec:intro} by the following global type in our specification language:
\begin{align*}
\Prcl = & \GInterOn{\mainO}{}{U}{\synt{start}\obligation{\xabs{U}.\synt{state} \doteq \synt{expect}}} \nnn
 \GInterOn{U}{}{I}{\synt{cmp}\obligation{\top,\top}} \nnn\\[-2mm]
&\GInterOn{I}{\xabs{f}}{S}{\synt{cmp}\obligation{\xabs{i}\doteq\xabs{dat},\result > 0}} \nnn
\GInterOn{I}{}{U}{\synt{resume}\obligation{\xabs{x} \doteq \xabs{f},\top}} \nnn
 \glGet{U}{\xabs{x}} \nnn \kend
\end{align*}\normalsize
We formally define the above syntax in Section~\ref{sec:types} and only give the intuition here:
$\top$ denotes true.
$\GInterOn{U}{}{I}{\synt{cmp}}$ denotes a message $\synt{cmp}$ from \ROLE{U} to \ROLE{I}, i.e., the call to a method $\synt{cmp}$.
Formula $\ROLE{U}.\synt{state} \doteq \synt{expect}$ is the postcondition for the process \emph{started} by this call at the callee object.
If two formulas are provided, the first is the precondition describing the state of the caller and the second is the postcondition describing the state of the callee and the return value, which is denoted by keyword $\result$.
The annotation $\xabs{f}$ denotes the memory location where the future of the denoted call is stored. 
Formula  $\xabs{i}\doteq\xabs{dat}$ states that $\xabs{dat}$, the parameter of \ROLE{S}\abs{.cmp}, carries the same value as received by \ROLE{I}\abs{.cmp} on parameter $\xabs{i}$, 
while formula $\xabs{x}\doteq\xabs{f}$ requires that 
parameter $\xabs{x}$ of the call at method \textsf{resume} carries the future of the previous call to \textsf{cmp}. 
Finally, $\glGet{U}{\xabs{x}}$ describes a read of \abs{U} on the future stored in the location $\xabs{x}$.
Note that we specify locations in formulas 
and avoid a situation 
 where an endpoint must guarantee an obligation containing values that it cannot access. 
Other approaches (e.g., Bocchi et al.~\cite{contract1}) allow this situation and thus require additional analyses of
history-sensitivity and temporal-satisfiability.

For the analysis, 
we adopt an approach similar to MPST:
We project a global type on endpoints defined inside it, to automatically derive local specifications for all objects and methods.
Additionally, formulas, which are used to specify conditions on the heap memory, 
are projected on the logical substructure of the callee, 
because the callee cannot access the caller's fields. 
\paragraph{Two-phase Analysis.}
The analysis requires that the protocol is encoded as a global type, 
which defines the order of method calls and future reads between objects, annotated with FO specifications of heap memory and passed data.
Our analysis has two phases. 
In Phase 1, the global type is used to generate local types for all endpoints. 
In Phase 2, the endpoints are type checked against their local types
and a causality graph is generated for checking for deadlocks.
The workflow of Phase 1 is based on MPST's approach, but is adjusted to the Active Object concurrency model:

\paragraph{Phase 1.}
The workflow of Phase 1 is shown in Fig.~\ref{fig:work}.
\begin{figure}[t]
\setlength{\abovecaptionskip}{0pt}
\centering
\includegraphics[scale=1]{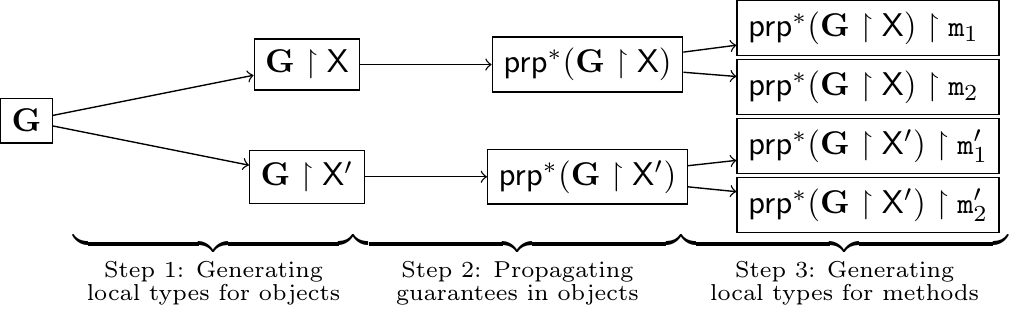}
\caption{Workflow for Phase 1: $\Prcl$ is a global type and $\upharpoonright$ denotes projection on object \objname resp. method \methodname. Function $\prop^\ast$ is the function propagating guarantees.}
\label{fig:work}
\end{figure}
\begin{itemize}
\item
\textit{Step 1}: The global type is projected onto the participating objects and generates \emph{object} types. 
Such a type specifies the obligation of an object for running methods in a certain order, 
and for guaranteeing the FOL specifications of the object's state. 
During projection, the FO-specifications are 
projected onto the substructure of the object in question. 
\item 
\textit{Step 2}: FO-specifications are propagated within an object type: as the order of method executions is specified by the specification, the postcondition of a method can be assumed as a \emph{pre}condition for the next method.
\item  
\textit{Step 3}: An object type is projected on its methods, producing \emph{method} types.
\end{itemize}

A global type encodes the following obligations (short as Obl. ) for the implementation: 
(\textit{Obl.\ a}) for each object, the observable events (calls and reads) are ordered as specified in the global type, 
(\textit{Obl.\ b}) for each method, the observable events are ordered as specified in the local type derived from the global type and
(\textit{Obl.\ c}) the whole system does not deadlock,
and adhere to the FO-specifications.

In the following, we demonstrate 
the workflow of Phase 1 for the global type in Example~\ref{ex:introex}. 
We do not formally introduce the syntax at this point.

\paragraph{Step 1: Object Types.} 
Projecting $\Prcl$ from Example~\ref{ex:introex} on object \xabs{U} results in
\small\vspace{-2mm}
\begin{align*}
\receiveTyped{\synt{start}}{\top} .
\sendTyped{I}{}{\synt{cmp}}{\top} .
\PutAs{\synt{state} \doteq \synt{expect}}. 
\receiveTyped{\synt{resume}}{\exists f.~\var \doteq f} .
\ReadAs{\var} .
\PutAs{\result > 0} 
\end{align*}\normalsize
Type  $\receiveTyped{\synt{start}}{\top}$ 
denotes a starting point for runtime execution.
Type $\sendTyped{I}{}{\synt{cmp}}{\top}$ denotes
an invocation of method $\synt{cmp}$.
Type $\PutAs{\phi}$ specifies the termination of the currently active process in a state where $\phi$ holds.
Position and postcondition of $\PutAs{\synt{state} \doteq \synt{expect}}$ are automatically derived.
The position is just \emph{before the next} method start and the postcondition is taken from the call in the global type.
The analysis ensures that no method executes in-between.
The precondition of \synt{resume} is \emph{weakened}, since
field $\xabs{f}$ is not visible to \abs{U} and callee \abs{U} cannot use all information from caller \abs{I}.
Weakening ensures that all locations in $\phi$ are visible to \abs{U}. 
Type $\ReadAs{\var}$ specifies a synchronization on the future stored in $\var$. 

\paragraph{Step 2: Propagation.} 
In the next step we propagate the postcondition of the last process to the precondition of the next process.
No process is specified as active between \abs{start} and \abs{resume}, so the heap is not modified --- thus, the postcondition of \abs{start} still holds when \abs{resume} starts.
Adding $\synt{state} \doteq \synt{expect}$ to the precondition of \abs{resume} strengthens the assumption for the type checking of \abs{resume}.
The propagation of conditions results in:\small\vspace{-2mm}
\begin{align*}
\prop^{\ast}(\project{\Prcl}{}{\xabs{U}}) =& \receiveTyped{\synt{start}}{\top} \nnn
\sendTyped{I}{}{\synt{cmp}}{\top} \nnn
\PutAs{\synt{state} = \synt{expect}}\nnn \\[-1mm]
&\receiveTyped{\synt{resume}}{\exists f.~\var \doteq f \wedge \synt{state} \doteq \synt{expect}} \nnn
\ReadAs{\var} \nnn
\PutAs{\result > 0} 
\end{align*}\normalsize

\paragraph{Step 3: Method Types.}
We generate a \emph{method type} to specify a method in isolation.
Projecting the object type in Step 2 on method \synt{resume} generates:\small\vspace{-2mm}
\begin{align*}
\prop^{\ast}(\project{\Prcl}{}{\xabs{U}})\!\upharpoonright\!\synt{resume}&=\receiveTyped{\synt{resume}}{\exists f.~\var \doteq f \wedge \synt{state} \doteq \synt{expect}} .
\ReadAs{\var} .
\PutAs{\result > 0} 
\end{align*}\normalsize

\noindent Method types share the syntax with object types. Projection from object types splits the object type
at positions where one method ends and another one starts.

\begin{figure}[b]\small
\centering
\includegraphics[scale=0.75]{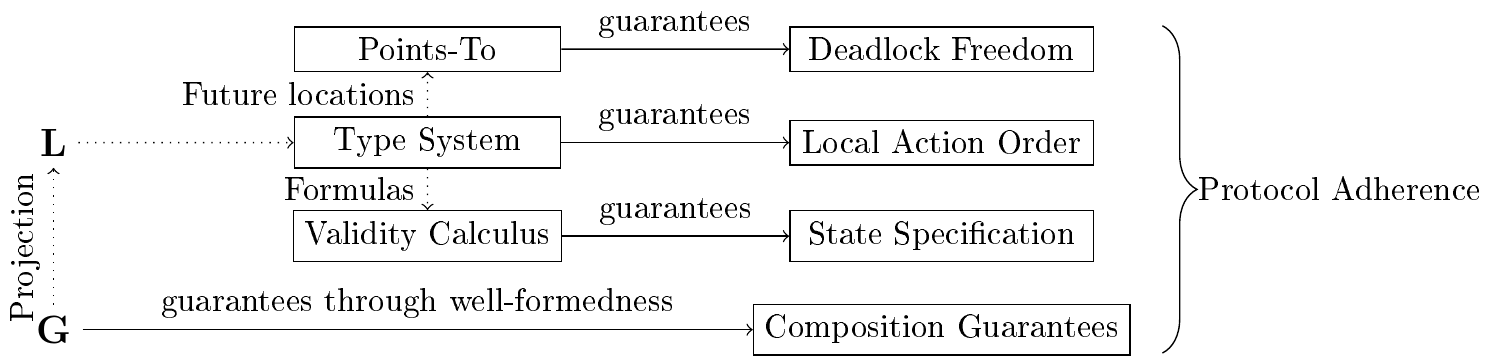}
\caption{Workflow for Phase 2 in our analysis.}
\label{fig:phase2}
\end{figure}
\paragraph{Phase 2.} After generating method types, 
Phase 2 of the 
analysis checks the implementation of methods against their method types, and checks the formulas for validity.
The type checking of method types guarantees the correct local order of events (\textit{Obl.\ b}). 
State specifications are checked by integrating a validity calculus~\cite{Din14} into the type system.
To guarantee (\textit{Obl.\ a} and \textit{c}), 
we require the following analyses:
\paragraph{Causality Graph.} \label{graph}
We generate a causality graph to ensure deadlock freedom (\textit{Obl.~c}): 
A deadlock free causality graph for Active Objects is cycle-free~\cite{MHPDead,HenrioLM17}.
A causality graph is also used to ensure that methods of one object are executed in the order specified in the global type
that the object obeys to (\textit{Obl.~a}).
\begin{center}
\begin{tikzpicture}[scale=0.8, every node/.style={scale=0.8}]
    \node[draw] at (0, 1.5)   (u) {\abs{U}};
    \node[draw,circle] at (1, 1.5)   (u1) {?};
    \node[draw,circle] at (2, 1.5)   (u2) {!};
    \node[draw,circle] at (4, 1.5)   (u3) {$\downarrow$};
    \node[draw,circle] at (7, 1.5)   (u4) {?};
    \node[draw,circle,fill=gray] at (8, 1.5)   (u5) {$\uparrow$};
    \node[draw,circle] at (9, 1.5)   (u6) {$\downarrow$};

    \node[draw] at (0, 0.75)   (i) {\abs{I}};
    \node[draw,circle] at (2.75, 0.75)   (i1) {?};
    \node[draw,circle] at (3.75, 0.75)   (i2) {!};
    \node[draw,circle] at (5.75, 0.75)   (i3) {!};
    \node[draw,circle] at (9.75, 0.75)   (i4) {$\downarrow$};

    \node[draw] at (0, 0)   (r) {\abs{S}};
    \node[draw,circle] at (4.5, 0)   (s1) {?};
    \node[draw,circle, fill=gray] at (8.5, 0)   (s2) {$\downarrow$};

     \draw[->] (u1) -- (u2);
     \draw[->] (u2) -- (u3);
     \draw[->,dotted] (u3) -- (u4);
     \draw[->] (u4) -- (u5);
     \draw[->] (u5) -- (u6);

     \draw[->] (u5) -- (s2);

     \draw[->] (i1) -- (i2);
     \draw[->] (i2) -- (i3);
     \draw[->] (i3) -- (i4);

     \draw[->] (s1) -- (s2);

     \draw[->] (u2) -- (i1);
     \draw[->] (i2) -- (s1);
     \draw[->] (i3) -- (u4);
\end{tikzpicture}
\end{center}\vspace{-2mm}
The nodes are the local types from the projected object types.
A solid edge connecting two nodes models that the statement for the first type directly causes the statement for the second type; 
for example, there are edges from a call to the corresponding receiving type.
The graph is partially generated from $\Prcl$, and partially generated from the code: 
The edge connecting the gray nodes is added by a \emph{Points-To} analysis, 
which maps a location of a future to the methods resolving this future.
\EK{
    The termination of a method causes the start of the next (as the object cannot switch the active process otherwise), but does not select the next method itself.
    A dotted edge models such \emph{indirect} causality: 
Indirect causality edges are considered when checking cycle-freedom check for deadlock freedom,
but not for checking the method order.
}

\paragraph{Model-theoretic Semantics.}
One of our contributions is the definition and verification of \emph{protocol adherence} from a model-theoretic point of view: 
The property that a program follows a specified scenario (the protocol) if every generated trace is a model for the translation of the global type.
We thus define protocol adherence through a \emph{logical} characterization of global types and translate types into constraints over \emph{traces}, which are sequences of configurations generated by the program.

\EK{This \emph{declarative} approach for defining protocol adherence allows us to connect 
the FO properties embedded in the type to the execution of methods by using a dynamic logic: 
For a statement $\statement$ the dynamic logic formula $\phi \Rightarrow [\statement]\psi$ expresses that the first-order formula $\psi$
holds after executing $\statement$, if $\phi$ holds in the beginning.
From the perspective of the trace logic, FOL describes a single configuration in the trace, while the modality $[\statement]$ relates the configuration before executing $\statement$ with the configuration
after executing $\statement$. We use modalities during type checking.}

\section{\coreactor, a Core Actor-Based Language using Futures}\label{sec:lang}
We introduce \coreactor, a simple Active Object language based on ABS~\cite{abs}. 
Due to space limitations, 
we only present the basic constructs of \coreactor below. 
For branching constructs we refer to Section~\ref{sec:branch}; repetition is introduced in Sec.~\ref{sec:rep}.
An \coreactor-program consists of a \abs{main} statement and a set of actors, which are objects that 
have fields and method but do not share state.
%
Inside an object, processes do not interleave and the currently active process must terminate before another one is scheduled. 
Therefore, single methods can be considered sequential for analysis.
We assume standard operations, literals and types for booleans, integers, lists and $\mathsf{Object}$.
\begin{definition}[\coreactor-Syntax]
Let \expr denote expressions, \type denote data types, $\var$ denote variable and field names, 
\objname denote object names, 
and \abs{Fut<T>} denote future types.
$\many{\cdot}$ represents possibly empty lists and $[\cdot]$ represents optional elements.\small\vspace{-2mm}
\begin{align*}
\prgm &::= \many{\objlang}~\xabs{main}\{\objname!\methodname(\many{\expr})\} \quad\! 
\objlang ::= \xabs{object}~\objname~\{\many{\method}~\many{\type~\var=\expr}\} \quad 
\method ::= \type~\methodname(\many{\type~\var})\{\statement;\xabs{return}~\expr\}\\
\statement &::= \big[[\xabs{Fut<}\type\xabs{>}]~\var = \big] \objname!\methodname(\many{\expr}) \sep  [\type]~\var=\expr \sep \big[\type\big]~\var = \expr\xabs.\getABS
\sep \xabs{skip} \sep\statement;\statement 
\end{align*}
\end{definition}
\vspace{-2mm}

Objects communicate only by asynchronous method calls using futures. Upon a method call, a fresh future is generated on callee side and is passed to the caller.
The callee writes the return value into the future upon termination of the corresponding process;  anyone with the access to the future can read, but not write, into it. 
\EK{We only consider static sessions, in which all objects are created before the start of the system.}
\coreactor is a standard imperative language with two additional statements:
(1) $\var = \objname ! \methodname (\many{\expr})$ calls method $\methodname$ with parameters $\many{\expr}$ on object $\objname$. The generated future is stored in $\var$.
The caller continues execution, while the callee is computing the call on $\methodname$ or scheduling 
$\methodname$ for later execution if another process is currently active.
(2) $\expr. \getABS$ reads a value from the future stored in $\expr$. If the process computing this future has not terminated, the reading process blocks.

To define a small-step reduction relation over events for the semantics of  \coreactor,
we first define an event as a process action with visible communication:
\begin{definition}[Events]
Let $\future,\future'$ range over futures. 
An \emph{event}, denoted by $\event$, is defined by the following grammar: \small \vspace{-2mm}
\begin{align*}
\event ::=  &\invocev(\objname,\objname',\future,\methodname,\many{\expr}) \mid \invocrev(\objname,\future,\methodname) \mid \resolvev(\objname,\future,\methodname,\expr) \mid \resolvrev(\objname,\future,\expr) \mid \noev
\end{align*}\normalsize
\end{definition}
\noindent
An \emph{invocation} $\invocev(\objname,\objname',\future,\methodname,\many{\expr})$ models that $\objname$ calls $\objname'.\methodname$ using $\future$ and passes $\many{\expr}$ as parameters. 
An \emph{invocation reaction} $\invocrev(\objname,\future,\methodname)$ models that $\objname$ starts executing $\methodname$ to resolve $\future$. 
A \emph{resolving} $\resolvev(\objname,\future,\methodname,\expr)$ models that $\objname$ resolves $\future$, 
which contains $\expr$ at the moment,
by finishing the execution of $\methodname$.
A \emph{fetch} $\resolvrev(\objname,\future,\expr)$ models that $\objname$ reads value $\expr$ from $\future$. 
\EK{Finally, $\noev$ models no visible communication.}

A configuration is composed of \emph{processes} and \emph{objects}.
A process has a unique future \future, a store {\lstore} which maps variables to literals, and the name \objname of its object.
An object has a unique name \objname, an active future {\future}, and a store {\store} which maps fields to literals. 
\begin{definition}[Runtime Syntax of Processes and Objects] 
The following grammar defines runtime processes and objects as configurations \config:
\small\vspace{-2mm}
\begin{align*}
\config ::= \task{\objname,\future,\methodname(\statement),\lstore} 
\sep 
\task{\objname,\future,\retvalue(\expr),\lstore} 
\sep 
\object{\objname,\future,\store} \sep \config~\config
\end{align*}\normalsize
\end{definition}
\noindent A process either is executing a method $\methodname$ for a request carried by $\future$ at some object $\objname$, represented by $\task{\objname,\future,\methodname(\statement),\lstore}$,  
or has returned $\expr$, represented by $\task{\objname,\future,\retvalue(\expr),\lstore}$.
An object $\object{\objname,\future,\store}$ has its name $\objname$, the future of the active process $\future$ and the heap $\store$.
We write $\object{\objname,\bot,\store}$ to indicate that $\objname$ is inactive.
Composition of configurations is commutative and associative, i.e., $\config~\config' = \config'~\config$ and $\config~(\config'~\config'') = (\config~\config')~\config''$.
We denote the initial configuration of a program \prgm with $\initial(\prgm)$. 
If all processes of a configuration \config have terminated, the configuration also terminates. 
The body of method \methodname is denoted by $M(\methodname)$. We write $\widehat{M}(\methodname,\many{\expr})$ for the initial local store 
of a task executing $\methodname$ with parameters $\many{\expr}$. 

We use \emph{traces}, sequences of pairs of events and configurations, to capture the behavior of a program. 
We only consider terminating runs and define big-step semantics  $\prgm \Downarrow \trace$ for \emph{finite} traces:
\begin{definition}[Run and Big-Step Semantics]
A run from $\config_1$ to $\config_n$ is a sequence of configurations $\config_1,\dots,\config_n$ with events $\event_1,\dots,\event_{n-1}$ such that:\vspace{-2mm}
\[\config_1 \transition_{\event_1} \config_2 \transition_{\event_2} \dots \transition_{\event_{n-1}} \config_n\]\normalsize
The trace $\trace$ of a run
is a sequence $(\event_1,\config_1),\dots,(\event_m,\config_m)$ where for every $1 \leq j < m \leq n$ there is a $\config$ 
such that $\config_j\!\transition_{\event_{j}}\!\!\config$ is in the run  
and $\event_j \neq \noev$. 
An \coreactor program \prgm generates \trace, 
written $\prgm \Downarrow \trace$, if 
there is a run from its initial configuration to a terminated configuration such that \trace is the trace of this run.
\end{definition}


\begin{figure}\footnotesize
\centering
\scalebox{0.9}{%
$\TINFER{call}
{
    \config \text{ does not contain}\ \future' \quad \eval{\expr} = \objname' \quad \config = \object{\objname,\future,\store}~\config'\quad \event = \invocev(\objname,\objname',\future,\methodname,\eval{\many{\expr'}})
}{
        \task{\objname,\future,\methodname(\expr!\methodname'(\many{\expr'});\statement),\lstore}~\config
        \transition_{\event}
        \task{\objname,\future,\methodname(\statement),\lstore}~\task{\objname',\future',\methodname'(M(\methodname')),\widehat{M}(\methodname,\eval{\many{\expr'}})}~\config
}
$}\vspace{2mm}
\scalebox{0.95}{%
$\TINFER{start}
{
    \event = \invocrev(\objname,\future,\methodname)
}{
        \task{\objname,\future,\methodname(\statement),\lstore}~\object{\objname,\bot,\store}~\config
        \transition_{\event}
        \task{\objname,\future,\methodname(\statement),\lstore}~\object{\objname,\future,\store}~\config
}
$}\vspace{2mm}
\scalebox{0.95}{%
$\TINFER{sync}
{
    \config = \task{\objname',\future',\retvalue(\expr'),\lstore'}~\config'\qquad \eval{\expr} = \future'\quad \event = \resolvrev(\objname,\future',\expr')
}{
                \task{\objname,\future,\methodname(\var = \expr.\getABS;\statement),\lstore} ~\object{\objname,\future,\store}~\config
                \transition_{\event}  
                 \task{\objname,\future,\methodname(\var = \expr';\statement),\lstore} ~\object{\objname,\future,\store}~\config
}
$}\vspace{2mm}
\scalebox{0.95}{%
$\TINFER{end}
{
    \event = \resolvev(\objname,\future,\methodname,\expr)
}{
                 \task{\objname,\future,\methodname(\returnABS~\expr),\lstore} ~\object{\objname,\future,\store}~\config
                 \transition_{\event}  
                 \task{\objname,\future,\retvalue(\eval{\expr}),\lstore} ~\object{\objname,\bot,\store}~\config
}
$}
\caption{The selected semantics rules. Full rules are provided in~\cite{newtechreport}.}
\label{fig:sos}
\end{figure}
Fig.~\ref{fig:sos} defines the reduction relation $\transition_{\event}$ for the semantics.  
\eval{\expr} denotes the evaluation of an expression \expr under stores $\sigma$ and $\rho$.
Rule \rulename{call} executes a method call on the object stores in $\expr$ by generating a fresh future $f'$ and an invocation event.
The new process is not set as active upon creation by \rulename{call}.
By rule \rulename{start}, the object $\objname$ must be inactive, when the process is started. An invocation reaction event is generated.
Rule \rulename{sync} synchronizes on a future $f'$ stored in $\expr$, by checking whether the configuration contains $\task{\objname',\future',\retvalue(\expr'),\lstore'}$, i.e. $f'$ is resolved, and reads the return value $\expr'$.
Rule \rulename{end} terminates a process.
In all other rules, the \event parameter is \noev.

\paragraph{Dynamic Logic.} \label{sec:ADL}
A dynamic logic combines 
FO-formulas over the heap with symbolic executions~\cite{loop,SEKing} of statements. 
A symbolic execution uses symbolic values to describe a possible set of actual values.
It does not reason about one execution of the statement, but describes a \emph{set} of executions. 
\begin{example}\label{ex:logic}
Formula 
$\exists \text{\abs{Int}}~a.~\big(a > 0 \wedge \text{\abs{i}} > a\big) \rightarrow [\text{\abs{j = i*2;}}] \text{\abs{j}} > 0$
describes that if there is a number $a$ bigger than $0$ and smaller than the value stored in \abs{i}, then after executing $\text{\abs{j = i*2;}}$, variable \abs{j} contains a positive value.
\end{example}
\noindent Based on ABSDL~\cite{absdl}, 
we present \coreactor Dynamic Logic (short as \ADL), which extends first-order logic over program variables and heap memory with modalities that model the effect of statements.
In this logic, method parameters are special variables and
a modality is a formula $[ \statement ]\phi$ which holds in a configuration, say \config, 
if $\phi$ holds in every configuration reached from \config after executing $\statement$.
We focus on the semantics of \emph{modality-free} formulas, which have configurations as models; the semantics of modalities is a transition relation between configurations.

\begin{definition}[Formulas $\phi$] \label{def:constraints}
We define the set of formulas $\phi$ and terms $\term$ 
by the following grammar, where $\predicate$ ranges over predicate symbols, 
$\logicfunction$ ranges over function symbols, 
$\mathsf{x}$ ranges over logical variables, 
and $\mathsf{v}$ ranges over logical and program variables.
The set of formulas is denoted by \ADL.\vspace{-2mm}
\begin{align*}
 \phi ::=   
\mathbf{tt} \mid \neg \phi \mid \phi \vee \phi \mid \predicate(\term \dots \term) \mid \term \geq \term \mid \term \doteq \term \mid \exists \type~ \mathsf{x}; \phi \mid \modal{\statement}\phi
\quad\quad
\term ::=     \mathsf{v} \mid \logicfunction ( \term  \dots  \term ) 
\end{align*}
\end{definition}
\vspace{-2mm}
Local program variables (i.e., $\xabs{v}$) are modeled as special function symbols. 
To model heap accesses, 
following Schmitt et al.~\cite{SchmittEtAl2011}, 
we use two function symbols $\textsf{store}$ and $\mathsf{select}$ with (at least) the axiom
\(\mathsf{select}(\textsf{store}(\textit{heap},\xabs{f},\xabs{o},\textit{value}),\xabs{f},\xabs{o}) = \mathit{value}\)
where \textit{heap} is a special local program variable modeling the heap explicitly. 
A special function symbol $\result$ is interpreted as the return value of a method, 
and a logical variable is \emph{free} if it is not bound by any quantifier. 
\begin{definition}
A formula $\phi$ is \emph{valid} if it evaluates to true in every configuration.
\end{definition}
\noindent Formulas are \emph{global} or \emph{\objname-formulas}.
Global formulas refer to
the heap of multiple objects, while \objname-formulas refer only to \objname.
The latter
contains only the function symbols for elements from \objname and the special function symbol \self modeling the reference to \objname.
For proving that an \objname-formula holds for a given state, if suffices to locally check the code of \objname.
A validity calculus for \ADL is presented in~\cite{Din14}.
\begin{definition}
Let $\phi$ be a formula. The weakened $\objname$-formula $\projectphi{\phi}{\objname}$ 
is obtained by replacing all function symbols in $\phi$ which are not exclusive to $\objname$ (i.e., refer to the fields of other objects) 
by free variables and existentially quantifying over them.
\end{definition}
\begin{example}\label{ex:logic2}
Let $\field$ be a field, \objname an object and \abs{i} the parameter of some method in class $\objname$. 
Consider $\phi = \objname.\field > 0 \wedge \text{\abs{i}} > \objname.\field$.
The formula $\phi$ is an $\objname$-formula, as $\phi = \projectphi{\phi}{\objname}$.
The weakening for some object $\objname'$ is $\projectphi{\phi}{\objname'} = \exists \text{\abs{Int}}~a.a > 0 \wedge \text{\abs{i}} > a$. 
$\projectphi{\phi}{\objname'}$ does not reason about $\objname.\field$, but still has the information of $\xabs{i}>1$.
\end{example}

\section{Behavioral-Type-Based Stateful Specification}\label{sec:types}
We define a specification language for global types to 
specify the behavior of the system. 
Following Sec.~\ref{sec:lang}, we only represent the key constructs and leave branching to Section~\ref{sec:branch} and repetition to Sec.~\ref{sec:rep}.
\begin{definition}[Syntax of Global Types] \label{def:globaltypes} \rm
Let $\phi, \psi$ range over \emph{modality-free} \ADL formulas and $\objname_i$ range over object names. 
$[\cdot]$ denotes optional elements.\vspace{-3mm}
\[
\Prcl ::= \GInterOn{\mainO}{}{\objname}{\methodname\obligation{\phi} }. \G
\qquad
\qquad
\G ::=
\GInterOn{\objname_1}{[\var]}{\objname_2}{\methodname\obligation{\phi, \psi} }. \G
\sep \glGet{\objname}{\expr}.\G
\sep \kend
\]
\end{definition}
\vspace{-3mm}
The \emph{calling type} $\GInterOn{\objname_1}{[\var]}{\objname_2}{\methodname\obligation{\phi, \psi} }$ specifies 
a method call from $\objname_1$ to $\methodname$ at $\objname_2$. 
If $\var$ is not omitted above the arrow,
the future of this call 
must be stored in location $\var$. 
The \ADL-formula $\phi$ 
specifies (1) the call parameters passed to the callee 
and (2) the memory 
of $\objname_1$ at the moment of the call.
Formula $\psi$ is the postcondition of the callee process and specifies the state of $\objname_2$ and the return value once \methodname terminates.
The exact point of termination is derived during projection.
The initial method call $\GInterOn{\mainO}{}{\objname}{\methodname\obligation{\psi} }$ 
only specifies the postcondition of the process running $\objname.\methodname$.
Type $\glGet{\objname}{\expr}$ specifies a synchronization on the future, to which the expression \expr evaluates. 
Every synchronization must be specified.
$\kend$ specifies the end of communication.

$\Prcl$ denotes a complete protocol with an initializing method call, while $\G$ denotes partial types.
Even without fields in the formula, the implementation is referenced in the specification, as endpoints are object names.
Object and method types share the same syntax. Together we call them \emph{local types}. 
The grammar of local types is defined as follows: 
\begin{definition}[Syntax of Local Types] \label{def:localtypes} \rm
Let $\phi$ range over \emph{modality-free} \ADL formulas and $\objname_i$ range over object names. 
$[\cdot]$ denotes optional elements.
\[
\LPrcl ::=  \receiveTyped{\methodname}{\phi}.\T
\qquad
\T ::=  \receiveTyped{\methodname}{\phi}.\T
\sep \sendTyped{\objname}{[\var]}{\methodname}{\phi}.\T
\sep \PutAs{\phi}.\T
\sep \ReadAs{\expr}.\T
\sep \skipT.\T
\sep  \kend
\]
\end{definition}
The type $\receiveTyped{\methodname}{\phi}$ denotes the start of a process computing $\methodname$ in a state where formula $\phi$ holds. 
Formula $\phi$ is the precondition of $\methodname$ and describes the local state and method parameters of $\methodname$.
Type $\PutAs{\phi}$ denotes the termination in a state where $\phi$ holds. Formula $\phi$ is a postcondition and describes the return value and the local store. 
Contrary to global types, a postcondition of a process is not annotated at the call, but at the point of termination \EK{because the point of termination is now explicit}.
Type $\sendTyped{\objname}{[\var]}{\methodname}{\phi}$ 
corresponds to the caller side of $\GInterOn{\objname_1}{[\var]}{\objname_2}{\methodname\obligation{\phi, \psi}}$.
Type $\ReadAs{\expr}$ 
models a read from $\expr$ and \skipT denotes no communication.
As for global types, $\kend$ models the end of communication. In our examples, we omit $\kend$ for brevity's sake.
We use $\LPrcl$ for complete local types and $\T$ for partial local types.

Projection has three steps: (1) projection of global types on objects, (2) \COMMENT{specification}condition  propagation, and (3) projection of object types on methods.

\paragraph{Projection on Objects.}
The projection on objects ensures that every object can access all locations occurring in its specification and adds $\PutAs{\phi}$ at the correct position. 
This requires an additional parameter in the projection to keep track of which process is specified to be active and what its postcondition is.

To track the postcondition of the last active method of an object, 
we use a partial function $\act:\obs \rightharpoonup \ADL$ to map objects to formulas. 
If no method was active on \objname yet, \act is undefined, written $\act(\objname) = \bot$. 
The projection of $\G$ on an object $\objname$ is denoted by $\project{\G}{\act}{\objname}$.
The selected projection rules for methods calls and termination are given in Fig.~\ref{fig:projobj}.
We write $\act_{\bot}$ for the function defined by $\forall \objname.~\act(\objname) = \bot$. For updates, we write $\act[\objname \mapsto \psi](\objname') = \casedin{ \psi &\text{ if }\objname = \objname' \\ \act(\objname') &\text{ otherwise}}$. 
\begin{figure}[tbh]\small\vspace{-2mm}
\begin{align*}
%
\rulename{1}\project{\GInterOn{\objname_1}{\var}{\objname_2}{\methodname\obligation{\phi,\psi}}. \G}{\act}{\objname_1} 
&=
  \sendTyped{\objname_2}{\var}{\methodname}{\phi} . (\project{\G}{\act[\objname_2 \mapsto \psi]}{\objname_1}) \text{ if } \act(\objname_1) \neq \bot \wedge \phi= \projectphi{\phi}{\objname_1}\\
\rulename{2}\project{\GInterOn{\objname_1}{\var}{\objname_2}{\methodname\obligation{\phi,\psi}}. \G}{\act}{\objname_2} 
&=
  \casedin{
  \receiveTyped{\methodname}{\projectphi{\phi}{\objname_2}} . (\project{\G}{\act[\objname_2 \mapsto \psi]}{\objname_1})                           & \text{if } \act(\objname_2) = \bot \\
  \PutAs{\act(\objname_2)}.\receiveTyped{\methodname}{\projectphi{\phi}{\objname_2}}. (\project{\G}{\act[\objname_2 \mapsto \psi]}{\objname_1})   & \text{if } \act(\objname_2) \neq \bot\\
  }\\  
\rulename{3}\project{\GInterOn{\objname_1}{\var}{\objname_2}{\methodname\obligation{\phi,\psi}}. \G}{\act}{\objname} 
&=
  \skipT. (\project{\G}{\act[\objname_2 \mapsto \psi]}{\objname}) \text{ if } \objname_2 \neq \objname \neq \objname_1 
  \\
\rulename{4}\project{\GInterOn{\mathbf{main}}{}{\objname_2}{\methodname\obligation{\phi}}.\G}{\act_{\bot}}{\!\!\objname_1} &=
\casedin{
\receiveTyped{\methodname}{\projectphi{\phi}{\objname_2}} . (\project{\G}{\act[\objname_2 \mapsto \psi]}{\objname_1}) 
&
\text{ if $\objname_2 = \objname_1$}
\\
\skipT. (\project{\G}{\act[\objname_2 \mapsto \psi]}{\objname_1}) 
& 
\text{ if $\objname_2 \neq \objname_1$}
}
\\ 
\rulename{5}  \project{\kend}{\act}{\objname} 
  &=
\casedin{
  \PutAs{\act(\objname)}.\kend
  & \text{ if } \act(\objname) \neq \bot\\
  \kend
  & \text{ if } \act(\objname) = \bot\\
}
\end{align*}\vspace{-6mm}
\normalsize
\caption{The selected rules for projection on objects.}
\label{fig:projobj}
\end{figure}

When projecting on caller $\objname_1$, a sending local type is generated by \rulename{1} if $\objname_1$ has an active process ($\act(\objname_1) \neq \bot$)
and the precondition can be proven by the caller ($\phi = \projectphi{\phi}{\objname_1}$). 
If the callee has an active process (\EK{i.e., the last active postcondition exists:} $\act(\objname_2) \neq \bot$), then the termination type for the active process is added by \rulename{2} before the receiving type.
If the callee is specified as being inactive (\EK{i.e., no process was running before and no postcondition is tracked} $\act(\objname_2) = \bot$), then only the receiving type is added by \rulename{2}. 
When projecting on any other object, \skipT is added by \rulename{3}.
In any case, \act is updated and maps the callee to a new postcondition.
Rules \rulename{4} and \rulename{5} are straightforward.
\EK{As usual, projection is undefined if no rule matches, and we omit $\act_{\bot}$ and write just $\project{\Prcl}{}{\objname}$.}

\paragraph{Propagation.}
In our concurrency model the heap does not change if no process is active. All guarantees from the last active process still hold for the next process.
By propagation, formulas are added from the postcondition of one method to the precondition of the next. 
Propagation moves formulas from where they \emph{must hold} to all points where they still are \emph{assumed to hold}.
Propagation 
replaces a partial local type, if the partial type matches the given pattern. 
\begin{definition}[Propagation]\label{def:prop}
The propagation function $\prop$ is defined
via term \emph{rewriting} (denoted $\rightsquigarrow$) as follows.  
$\prop^{\ast}$ denotes the fixpoint of rewriting.
\vspace{-2mm}
\[
\rulename{1}~\PutAs{\phi}.\receiveTyped{\methodname}{\psi} \rightsquigarrow \PutAs{\phi}.\receiveTyped{\methodname}{\psi \wedge \phi@\objname} \text{   where $\objname$ is the target object}
\]
\normalsize
\end{definition}

\paragraph{Projection on Methods.}
The projection on a method, denoted by 
$\T \upharpoonright_{\methodname'} \methodname$, 
results in a \emph{set} of method types. 
A method may have multiple method types, as long as the method types are \emph{distinguishable}, which means
that they have non-overlapping preconditions.
Formally, two preconditions $\phi_1$ and $\phi_2$, are distinguishable if the 
formula $\neg (\phi_1 \wedge \phi_2)$ is valid. 
\EK{In the case of overlapping preconditions, multiple preconditions can hold at the same time and it is not guaranteed that the correct type will be realized.}

The rules for projection on a method are straightforward and we refer to the Section~\ref{sec:concept} for an example 
and to the appendix for full definitions.
%
\begin{definition}[Well-Formedness]\label{def:well}
A global type $\Prcl$ is \emph{well-formed}, if the projections on all methods are defined and all types of a method are distinguishable. 
\end{definition}

\paragraph{Semantics of Types as Constraints on Traces.}
To formalize the behavioral types of the previous section, we transform them into first-order constraints over traces.

We define $\histype$ as a function transforming global types to constraints on traces.
Recall that we have defined $\config$ for configurations and $\event$ for events.
The primitive $\config(i)$ references the $i$th configuration and $\event(i)$ references the $i$th event in a trace.
We use events and formulas as colors and thus include futures, method names, literals and object names in the domain. 
Constraints refer to \ADL formulas $\phi$ with $\config(i) \models \phi$, meaning that in the $i$th configuration, $\phi$ holds.

To restrict a constraint to a subtrace, we use \emph{relativization}~\cite{henkin}, a \emph{syntactic} restriction of constraint $\gamma$ to a substructure described by another constraint $\chi$.
\begin{definition} \label{def:relative}
Let $\chi(x)$ be a constraint with a free variable $x$ of data type $\type$ and $\gamma$ another constraint.
The relativization of $\gamma$ with $\chi(x)$, written $\gamma[ x\in \type / \chi ]$, replaces all subconstraints of the form $\exists y \in \type.\gamma'$ in $\gamma$ by
$\exists y\in \type. \chi(y) \wedge \gamma'$.
%
%
\end{definition}
\noindent The main rules for translating $\Prcl$   into a constraint $\histype(\Prcl)$ are defined as follows. 
\begin{definition}[Semantics of Global Types]\label{def:msosem}
Predicate $\isput(i)$ holds if $\event(i)$ is a resolving event and $\actC(i,\objname)$ holds if \objname is active in $\config(i)$.\fontsize{8}{10} 
\begin{align*}
\rulename{1}\histype(&\GInterOn{\mainO}{}{\ROLE{\objname_2}}{\methodname \obligation{\psi}}.\G) ~=~\exists j,k .~\exists f.~\exists \expr'.~\event(j) \!\doteq\! \invocrev(\objname_2,f,\methodname) \wedge \config(j) \!\models\! \projectphi{\phi}{\objname_2} \wedge\\
&\event(k) \!\doteq\! \resolvev(\objname_2,f,\expr') \wedge \config(k) \!\models\! \psi  \wedge \forall l . l\!\neq\! j \wedge l\!\neq\! k \Rightarrow \isput(l) \wedge \histype(\G)\\
\rulename{2}\histype(&\GInterOn{\ROLE{\objname_1}}{\var}{\ROLE{\objname_2}}{\methodname \obligation{\phi,\psi}} ) ~=~\exists i,j,k .~\exists f.~\exists \expr, \expr'.~\\
&\event(i) \!\doteq\! \invocev(\objname_1,\objname_2,f,\methodname,\expr) \wedge \config(i) \!\models\! \phi \wedge \event(j) \!\doteq\! \invocrev(\objname_2,f,\methodname) \wedge \config(j) \!\models\! \projectphi{\phi}{\objname_2} \wedge\\
&\event(k) \!\doteq\! \resolvev(\objname_2,f,\expr') \wedge \config(k) \!\models\! \psi \wedge \config(i) \!\models\! (\objname_1.\xabs{x} \!\doteq\! f) \wedge \forall l .~l\!\neq\! i \wedge l\!\neq\! j \wedge l\!\neq\! k \Rightarrow \isput(l) \\
\rulename{3}\histype(&\G_1 . \G_2) =\bigwedge_{\objname}\big(\exists i\in\mathbb{N}.~\histype(\G_1)[j\in\mathbb{N}/\actC(j,\objname) \Rightarrow j < i] \wedge \histype(\G_2)[j\in\mathbb{N}/\actC(j,\objname) \Rightarrow j \geq i]\big)
\end{align*}\normalsize
\end{definition}
\vspace{-3mm}
The constraint \rulename{1} for the call type has three events modeling (1) a call, (2) the start of the process and (3) the existence of the termination of the process. 
Moreover, the projected formulas hold at the configurations for these events.
Every other event is a \resolvev. The exact position of termination (i.e., \resolvev events) is not specified in global types, so we do not constrain them. 
Reading from a location is defined analogously. 
The translation of $\G_1. \G_2$ models that there is a position $i$ such that, for every object \objname, 
the events described in $\histype(\G_1)$ are in the subtrace before $i$ and those in $\histype(\G_2)$ are in the subtrace after $i$.

\EK{The restriction is applied for every object, to ensure the following property:
If a trace is a model for the translation of a type $\Prcl$, then for each participating object (1) all events of this objects have the same order as specified in $\Prcl$ and (2) at the moment of the event, the corresponding FO formula holds.
The translation of, e.g., $\GInterOn{\ROLE{\objname_1}}{}{\ROLE{\objname_2}}{\methodname_2}.\GInterOn{\ROLE{\objname_1}}{}{\ROLE{\objname_3}}{\methodname_3}$
describes that ${\ROLE{\objname_2}}.{\methodname_2}$ is called before ${\ROLE{\objname_2}}.{\methodname_2}$, but does \emph{not} describe that the execution start in the same order. 
Thus, there are multiple possible event order satisfying this constraint, but from \emph{every local point of view} the differences between these traces are not visible.
}


\section{Analysis}\label{sec:type}
Verifying deadlock freedom requires a \emph{Points-To} analysis in addition to a type system.
Deadlock freedom is equivalent to cycle-freedom of causality graphs~\cite{MHPDead} in Active Objects.
The \emph{causality graph} of a global type $\Prcl$ is $\caus(\Prcl) = (V,E)$. 
Each node $\T \in V$ is a local type, 
and each edge $(\T_1,\T_2) \in E$ models that $\T_2$ must happen after $\T_1$.
\begin{definition}[Causality Graph]
Let $\Prcl$ be a well-formed global type. 
The nodes of its causality graph $\caus(\Prcl)$ 
are all partial local types derived from projecting $\Prcl$ on all endpoints. 
An edge $(\T_1,\T_2)$ is added if either (1) $\T_1 = \T.\T_2$ is a partial type for some $\T$
in some projection on some object 
or (2) $\T_1$ is the sending type and $\T_2$ the receiving type from the projection of a single calling type.
\end{definition}
Note that global types do not contain sufficient information to deduce all causality, 
e.g., the causality of \abs{get} statements cannot be deduced from a global type
because synchronizations on futures are specified over \emph{locations}.
We use a Points-To analysis for futures~\cite{MHPDead} instead. 
For generating a causality graph, 
we first derive a \emph{partial} causality graph from the global type, 
and then we apply the Points-To analysis during type checking for the graph completion by deducing the missing edges.
The Points-To analysis, defined below, determines which methods are responsible to resolve the futures in a given expression. 
\begin{definition}[Points-To] \label{def:p2}
The \emph{Points-To analysis} determines the set $\mathsf{p2}(\expr)$ of methods, which may have resolved the future stored in an input expression $\expr$.
We can express this using constraints, to integrate it into the type system:\footnotesize\vspace{-1mm}
\begin{align*}
   &\forall i\in\mathbb{N}.~ \config(i) \doteq \task{\objname', \future, \retvalue(\expr'),\lstore}~\task{\objname, \future', \methodname'(\var = \expr\xabs.\getABS; \statement''),\lstore''}~\config  \wedge \eval{\expr} = \future\rightarrow \\
   &\exists j\in\mathbb{N}.~ j < i \wedge \config(j) \doteq \task{\objname', \future, \methodname(\statement),\lstore'}~\config' \wedge \methodname \in \mathsf{p2}(\expr)
\end{align*}\normalsize
\end{definition}
Whenever a $\expr.\xabs{get}$-statement is checked against a type $\ReadAs{\expr}$, edges are added between the node of termination type of the methods which \expr can point to, and the node of the current type $\ReadAs{\expr}$.
\EK{Although Points-To is undecidable, well-scaling tools which safely overapproximate are available~\cite{MHP}.} 
%

\begin{definition}[Admissibility]
A causality graph is \emph{admissible} if
(1) every path is cycle-free and (2)
for every object \objname, and for any pair of receiving types of \objname, there exists a connecting path without an edge of the form $(\PutAs{\phi},\receiveTyped{\methodname}{\psi})$.
\end{definition}
\EK{
The graph on page~\pageref{graph} is admissible. With a non-admissible graph, methods may deadlock (violating (1)) or be executed in the wrong order (violating (2)).
}

\paragraph{Type System and Analysis.} 
The auxiliary \ADL-formula  
$\mathsf{post}(\actor.\methodname,\phi)$
models that the value in every future resolved by $\actor.\methodname$ satisfies $\phi$, 
while formula $\mathsf{Post}(\Prcl)$ represents the conjunction of all postconditions specified in $\Prcl$. 
Figure~\ref{fig:type} shows selected typing rules invoking the validity calculus~\cite{Din14} and Points-To analysis.

Before introducing the typing rules, 
we define $\mathsf{Roles}(\Prcl)$ as the set of objects in $\Prcl$, 
$\caus(\Prcl) + E$ as the set of edges of $\caus(\Prcl)$ and $E$ (i.e., $E$ is added into $\caus(\Prcl)$), 
$\textsf{term}(\methodname)$ as the set of $\downarrow$ nodes of method \methodname, 
and $\mathsf{node}(\statement)$ as the set of nodes referring to the types that have typed $\statement$.
We define three kinds of type judgments:
\paragraph{(I) The Type Judgment for \emph{Programs}.}
$\proves \prgm \hastype \Prcl$ checks $\prgm$ against global type $\Prcl$. 
The well-formedness of $\Prcl$ (Def.~\ref{def:well}) is ensured during type checking.
Rule \rulename{T-Main} checks that every endpoint in $\Prcl$ is implemented in $\prgm$, the main block makes the correct initializing call and checks each object against its object type.
The edges collected from the typing rules for objects are added to the partial causality graph $\caus(\Prcl)$ and the resulting graph is checked for admissibility.

\paragraph{(II) The Type Judgment for \emph{Objects}.}
$\Phi \proves \objlang \hastype \LPrcl \has E$
checks whether $\objlang$ is well-typed by $\LPrcl$ under a given $E$ with $\Phi$. 
$E$ is a set of causality edges and $\Phi$ is a set of \ADL formulas.
Rule \rulename{T-Object} projects $\LPrcl$ on all methods, checks each method $\methodname_i$ by $\project{\LPrcl}{}{\methodname_i}$ and collects all resulting edges.

\begin{samepage}
\begin{figure}[!thb]\centering
\scalebox{0.95}{%
$\TINFER{T-Main}{
\deduce{
\exists j \leq n.~\Prcl = \GInterOn{\xabs{main}}{}{\objname_j}{\methodname\obligation{\phi}}.\G \quad 
\forall i\leq n.~\mathsf{Post}(\Prcl) \proves \objlang_i \hastype \prop^{\ast}(\project{\Prcl}{}{\objname_i}) \has E_i
}{
\objlang_i = \xabs{object}~\objname_i \{\dots\} \quad \!
\mathsf{Roles}(\Prcl) = \{\objname_1,\dots,\objname_n\} \qquad \caus(\Prcl) + \bigcup_{i\leq n}E_i\text{ admissible}
}
}{
\proves \objlang_1\quad\dots\quad\objlang_n \quad \xabs{main}\{\objname_j!\methodname()\} \hastype \Prcl 
}
$}\vspace{2mm}
\scalebox{0.9}{%
$\TINFER{T-Object}{
    \deduce{
\forall i \leq n.~ \Phi,\phi_i,\xabs{skip} \proves \statement_i \hastype \T_i \has E_i \qquad E = \bigcup\nolimits_{i \leq n}E_i
    }{
\forall i \leq n.~ \project{\LPrcl}{\act}{\methodname_i} = \receiveTyped{\methodname_i}{\phi_i}.\T_i
    }
}{
\Phi \proves \xabs{object}~\objname\{\type_1~\methodname_1(\many{\type~\var})\{\statement_1\}  ~\dots ~
\type_n~\methodname_n(\many{\type~\var})\{\statement_n\} \quad \many{\type~\var=\expr}\} \hastype \LPrcl \has E
}
$}\vspace{2mm}
\scalebox{0.9}{%
$\TINFER{T-Return}{
\Phi \Rightarrow [\statement;\returnABS~\expr]\phi 
}{
\Phi, \statement \proves \returnABS~\expr \hastype \PutAs{\phi} \has E
}
\qquad
\TINFER{T-Call}{
\deduce{
\Phi \Rightarrow [\statement;\type~\var = \objname!\methodname(\many{\expr})]\phi
}{
\Phi, \statement;\type~\var = \objname!\methodname(\many{\expr}) \proves \statement' \hastype \T \has E
}
}{
\Phi, \statement \proves \type~\var = \objname!\methodname(\many{\expr});\statement' \hastype 
\sendTyped{\objname}{\var}{\methodname}{\phi}.\T \has E
}
$}\vspace{2mm}
\scalebox{0.9}{%
$\TINFER{T-Get}{
\deduce{
E = E' \cup \{(n,n') | \exists \methodname \in \mathsf{p2}(\expr).~ n \in \mathsf{term}(\methodname) \wedge n' \in \mathsf{node}(\statement;\expr.\getABS)\}
}{
\Phi, \statement;\type~\var = \expr.\getABS \proves \statement' \hastype \T \has E'
}
}{
\Phi, \statement \proves \type~\var = \expr.\getABS;\statement' \hastype 
\ReadAs{\expr}.\T \has E
}
$}
\caption{The selected typing rules.}
\label{fig:type}
\end{figure}
\paragraph{(III) The Type Judgment for \emph{Statements}.} 
$\Phi,\statement \proves \statement \hastype \T \has E$
checks whether $\statement$ is well-typed by $\T$ under a given $E$ with $\Phi,\statement$.
The environment $\statement$ are the statements type-checked so far. Whenever an \ADL formula is checked, a validity check is performed and \statement is added in the modality to consider the side-effects on the heap memory so far.
However, these are not recorded in $E$: The causality edges only record which method a \abs{get} statement synchronizes on.
Rule \rulename{T-Return} checks that after executing all the type-checked statements, the $\returnABS$ statement results in a state where $\phi$ holds.
Rule \rulename{T-Call} also checks the formula $\phi$ which describes the state when the call has to be executed.
Rule \rulename{T-Get} additionally executes the Points-To analysis and adds all the edges as described in the previous section.
\end{samepage}

\begin{theorem}[Deadlock Freedom and Protocol Adherence]\label{thm:main}
Let $\prgm$ be a program and $\Prcl$ be a global type.
If $\prgm$ is well-typed against $\Prcl$ then (1) $\prgm$ does not deadlock 
and (2) every generated trace from $\prgm$ satisfies $\histype(\Prcl)$: \small\[\proves \prgm \hastype \Prcl   \rightarrow \big( \forall \trace.~\prgm\!\Downarrow\!\trace \rightarrow \trace \models \histype(\Prcl) \big)\] \normalsize
\end{theorem}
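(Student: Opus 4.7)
The plan is to split the proof along the two conclusions and tackle deadlock freedom first, since it is by far the simpler half. For (1), I would invoke the characterization of deadlock freedom for Active Objects in terms of cycle-freedom of the runtime causality graph (the result cited from \cite{MHPDead}). The key step is a soundness lemma for the static causality graph: for every finite run $\prgm \Downarrow \trace$, the runtime MHP/causality relation induced by $\trace$ refines the static graph $\caus(\Prcl) + \bigcup_i E_i$. The non-\abs{get} edges come directly from projection (definition of $\caus(\Prcl)$), while the edges for \abs{get} are justified by the soundness of the Points-To overapproximation used in \rulename{T-Get}. Since \rulename{T-Main} requires the resulting graph to be admissible (hence cycle-free), the runtime graph is cycle-free, so $\prgm$ cannot deadlock.

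For (2), I would reduce protocol adherence to a \emph{per-object} conformance result and then lift it back through Definition~\ref{def:msosem}. Concretely, I would first prove a projection-soundness lemma: for any trace $\trace$, $\trace \models \histype(\Prcl)$ iff for every object $\objname \in \mathsf{Roles}(\Prcl)$, the subtrace of $\trace$ visible to $\objname$ is a model of the local type $\project{\Prcl}{}{\objname}$ (read as a trace constraint analogous to Def.~\ref{def:msosem}). The relativization in rule \rulename{3} of that definition is exactly what makes such a decomposition work: it forces the local ordering witnessed by each $\objname$ to match the global ordering, without constraining interleavings across objects beyond what their endpoints observe. I would then prove that propagation $\prop^{\ast}$ is semantics-preserving, so that checking against $\prop^{\ast}(\project{\Prcl}{}{\objname})$ yields the same models as $\project{\Prcl}{}{\objname}$: propagating a postcondition to a later precondition is justified because between the two the object has no active process, so its heap is unchanged.

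With those lemmas in place, the remaining work is to show that the per-object typing judgment $\mathsf{Post}(\Prcl) \proves \objlang_i \hastype \prop^{\ast}(\project{\Prcl}{}{\objname_i}) \has E_i$ implies that every local subtrace of $\objlang_i$ satisfies its projected local type. I would proceed by induction on the local type (and hence on the structure of the method body, which is finite since we are in the non-repetitive fragment of Sec.~\ref{sec:lang}), case-analysing on the statement typing rules \rulename{T-Call}, \rulename{T-Get}, \rulename{T-Return}, etc. Two ingredients are essential here: (i) the admissibility condition (2) on the causality graph guarantees that methods are dispatched in the order specified by the object type, so the sequence of $\receiveTyped{\methodname}{\phi}$-events in the local subtrace matches the order imposed by projection, and (ii) soundness of the \ADL validity calculus of \cite{Din14} ensures that each dynamic-logic obligation $\Phi \Rightarrow [\statement]\phi$ discharged during typing translates into $\config \models \phi$ at the corresponding runtime configuration. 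Weakening via $\projectphi{\cdot}{\objname}$ is sound because existentially quantifying over fields $\objname$ cannot access is a logical consequence, which preserves models.

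The main obstacle is the interface between the local dynamic-logic obligations and the global model-theoretic semantics, i.e.\ showing that a collection of per-method modal implications really does entail $\config(i) \models \phi$ at the precise trace indices demanded by Definition~\ref{def:msosem}. This requires a careful correspondence between the statement-sequence environment tracked in the judgment $\Phi, \statement \proves \statement' \hastype \T \has E$ and the actual execution prefix of the runtime trace, so that the accumulated modality $[\statement]$ in the typing rules faithfully reflects the heap effect on the corresponding trace segment. Once this correspondence is made precise, the inductive step reduces to soundness of the underlying dynamic logic plus bookkeeping, and the theorem follows by combining deadlock freedom with the per-object conformance.
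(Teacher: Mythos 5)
Your proposal assembles the same essential ingredients as the paper's (sketched) proof --- soundness of the Points-To over-approximation plus admissibility of $\caus(\Prcl)$ for deadlock freedom and method order, a propagation-preserves-semantics lemma for $\prop^{\ast}$, and soundness of the \ADL validity calculus to turn the modal obligations $\Phi \Rightarrow [\statement]\phi$ into $\config \models \phi$ at the right configurations --- but it organizes part (2) differently. The paper's central lemma is a subject-reduction-style statement over \emph{global} traces and \emph{global} types: every prefix of every generated trace is a model of the translation of some prefix of $\Prcl$, proved by following the execution step by step; adherence for the complete trace then follows from well-formedness and deadlock freedom. You instead decompose first by endpoint (a projection-soundness lemma reducing $\trace \models \histype(\Prcl)$ to per-object satisfaction of the local types) and then induct on the structure of local types and method bodies. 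Your route buys a cleaner separation between the "global ordering" and "local verification" concerns and matches the relativization in rule \rulename{3} of Definition~\ref{def:msosem} more directly; the paper's route avoids having to state the decomposition as a standalone equivalence and makes the interaction with branching and with the prefix ordering of events more explicit.

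One point needs care in your decomposition lemma: stated as an \emph{iff} over arbitrary traces it is too strong. The global constraint for a calling type ties the caller's $\invocev$ and the callee's $\invocrev$/$\resolvev$ to the \emph{same} future $f$, whereas the conjunction of the per-object local constraints existentially quantifies futures independently on each side, so "all local subtraces satisfy their local types" does not by itself entail the global constraint. The implication you actually need does hold, but only for well-formed traces generated by the operational semantics (where \rulename{call} and \rulename{start} guarantee the correlation); you should restrict the lemma accordingly, as the paper implicitly does by working with program-generated traces throughout. With that restriction, and with the correspondence you already identify between the accumulated statement environment in $\Phi,\statement \proves \statement' \hastype \T \has E$ and the runtime execution prefix, your argument goes through.
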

\vspace{-5mm}

\section{Loops and Repetition}\label{sec:rep}
In this section we present the whole workflow of the previous section for \coreactor extended with repetition.
The language is extended with loops and the types with \emph{repetition} types $(\G)^{\ast}_{\phi}$ (resp. $(\T)^{\ast}_{\phi}$).
A repetition type resembles a Kleene-star and models the finite repetition of the type $\G$ (resp. $\T$). 
The formula $\phi$ is a loop invariant and has to be satisfied whenever a loop iteration starts or ends. 
\begin{definition}[Syntax with Repetition]
\[
\statement ::= \dots \sep \xabs{while}(\expr)\{\statement\}\qquad
\G ::= \dots \sep(\G)^{\ast}_{\phi}.\G\qquad
\T ::= \dots \sep (\T)^{\ast}_{\phi}\nnn\T
\]
\end{definition}

By syntactic restrictions, the local type $\LPrcl$ of an object cannot have the form $(\T)^\ast_\phi.\T'$, which forbids it to start with a loop.
The intuition behind this restriction is that
every loop has an invariant that an object must guarantee before 
executing the next iteration. 
If an object is not active before the loop, it cannot guarantee the invariant in the very beginning, thus repetition can start with the second action at the earliest.
Below give an example for using invariants.
\begin{example}\label{ex:rep}
Consider a big data analysis webtool with a client-side GUI \ROLE{U} and a server-side computational server \ROLE{S}. We model the following scenario: 
\begin{floatingfigure}[r]{0.25\textwidth}\hspace{-5mm}
\begin{tikzpicture}[scale=0.75, every node/.style={transform shape}]

    \node [copy shadow={draw=black!30,fill=black!30,shadow xshift=0.25ex,
        shadow yshift=-0.25ex},fill=white,draw=black,thick,font=\bfseries]
        at (1,4.5) {\ROLE{U}};

    \node [copy shadow={draw=black!30,fill=black!30,shadow xshift=0.25ex,
        shadow yshift=-0.25ex},fill=white,draw=black,thick,font=\bfseries]
        at (3,4.5) {\ROLE{S}};

    \draw[dotted] (1,1) -- (1,4.2);
    \draw[dotted] (3,1) -- (3,4.2);

    \draw[->] (0.25,4) -- (0.75,4);
    {\draw[->] (1.25,3.5) -- (2.75,3.25);}
    {\draw[->] (2.75,2.75) -- (1.25,2.5);}

    {\draw[->] (1.25,1.75) --  (2.75,1.5);}

    \filldraw[draw=black, fill=black!50] (0.75,3) rectangle (1.25,4);
    \node[rotate=90] at (1,3.5) {run};
    {\filldraw[draw=black, fill=black!50] (0.75,2.5) rectangle (1.25,1.75);}
    \node[rotate=90] at (1,2.125) {up};
    {\filldraw[draw=black, fill=black!50] (2.75,1) rectangle (3.25,3.25);}
    \node[rotate=90] at (3,2) {comp};

\draw[dashed] (0.5,2.85) -- (0.5,1.25) -- (3.5,1.25) -- (3.5,2.85) -- (0.5,2.85);
\end{tikzpicture}
\end{floatingfigure}
\ROLE{U} sends data to the computational server by calling \xabs{S.comp}. To stay responsive, \ROLE{U} ends its initial process. \ROLE{U} is called repeatedly on \abs{G.up} by the server to update the progress. 
Whenever \ROLE{U} is updated, the server also gets information by reading from the future of the last call to \abs{U.up}.
The sequence diagram to the right illustrates the protocol. 
During updating, \ROLE{U} must stay in a state expecting to receive updates from the server. It is therefore important to specify that field \abs{U.expect} is not \xabs{Nil}.
\footnotesize
\begin{align*}
 \GInterOn{\mainO}{}{U}{\mathsf{run}\obligation{\top}} \nnn
 \GInterOn{U}{}{S}{\xabs{comp}\obligation{\top,\top}} \nnn
 \Big(
 \GInterOn{S}{\xabs{x}}{U}{\xabs{up}\obligation{\top,\top}} \nnn
 \glGet{S}{\xabs{x}} 
 \Big)^{\ast}_{\ROLE{U}.\xabs{expect} \neq \xabs{Nil}}
 \nnn
\mathbf{end}
\end{align*}\normalsize
The invariant $\ROLE{U}.\xabs{expect} \neq \xabs{Nil}$ specifies the condition that field \abs{U.expect} is a non-empty list.
This is propagated during projection, which results in the following local type for \abs{U.up}
\[
\receiveTyped{\xabs{update}}{\self.\xabs{expect} \neq \xabs{Nil}}.\PutAs{\self.\xabs{expect} \neq \xabs{Nil}} 
\]
\noindent There is no repetition because \emph{being repeatedly called} is only visible for the whole object, not a single process. The type of \abs{S.cmp} however contains a repetition:
\[
\receiveTyped{\xabs{comp}}{\top}.\big(\sendTyped{\xabs{U}}{\xabs{x}}{\xabs{up}}{\top}.\ReadAs{\xabs{x}}\big)_{\exists l.~l \neq \xabs{Nil}}.\PutAs{\top}
\]
\end{example}

The workflow is the same as described above. We provide the projection, translation, propagation and typing rules as extension of the previous systems.
\begin{definition}[Projection Rules for Loops]\newline
The auxiliary predicate $\mathit{rcv}(\objname,\G)$ holds if $\objname$ is specified as being called in $\G$.
\small\begin{align*}
\project{ \big((\G)^{\ast}_{ \phi }.\G'\big) }{\act}{\objname} &= \casedin{
\PutAs{\act(\objname)}  .  (\T'')^{\ast}_{ \projectphi{\phi}{\objname}  } . \T'
&\begin{array}{l}
\text{if }\T \neq \skipT \!\wedge\! 
\phantom{\neg}\mathit{rcv}(\actor,\G) \!\wedge\! \act(\actor)\!\neq\! \bot \!\wedge\! \mathit{cs}(\phi)
\end{array}\\
(\T)^{\ast}_{ \projectphi{\phi}{\objname}  } . \T'
&\begin{array}{l}
\text{if }\T \neq \skipT \!\wedge\! 
\neg\mathit{rcv}(\objname,\G) \!\wedge\! \act(\objname)\!\neq\!\bot \!\wedge\! \mathit{cs}(\phi)
\end{array}\\
\T'&\text{if }\T = \skipT \!\wedge\! \mathit{cs}(\phi)
}\\ 
&\text{Where $\project{\G.\kend}{\act[\objname \mapsto \bot]}{\objname} = \T''$,$\project{\G}{\act}{\objname} = \T$ and $\project{\G'}{\act}{\objname} = \T'$}
\end{align*}\normalsize
\end{definition}
The auxiliary formula $\mathit{cs}(\phi)$ specifies that all weakenings of $\phi$ imply $\phi$. 
This is necessary to reject invariants that connect multiple heaps: e.g., this condition would reject $\xabs{G.i} \doteq \xabs{S.i}$, as it cannot be guaranteed by \abs{G} and \abs{S} separately. 
This condition, however, admits $\xabs{G.i} \doteq 1 \wedge \xabs{S.i} = 1$.
The first rule projects global types to object types.
The first case is applied if the object participates in the repetition of the inner type $\G$ \emph{by being repeatedly called}. 
The last active process must terminate first and the repeatedly called method must terminate within the repetition. The termination inside the loop is ensured by projecting the inner type with an appended $\kend$.
The second case is applied if the object participates in the repetition ($\T \neq \skipT$) by any other repeated action then being called ($\neg\mathit{rcv}(\objname,\G)$). 
Finally, the last case skips the repetition if the object does not participate in it.

The second rule projects object types to methods.
The rule distinguishes whether the whole process is inside the repetition or not. If the process is completely inside, the repetition is removed, as it is not visible to the method.

In presence of repetition, invariants have to be propagated inside the repeated, the previous, and the next types.
The following definition summarizes gives the rules for repetition, additionally to rule $\rulename{1}$ in Def.~\ref{def:prop}.
\begin{definition}[Rules for Propagation for Repetition] 
\small\begin{align*}
&\rulename{2}~\PutAs{\phi}.(\T)^{\ast}_{\psi} \rightsquigarrow \PutAs{\phi \wedge \psi}.(\T)^{\ast}_{\psi} \qquad
\rulename{3}~(\T)^{\ast}_{\psi}.\receiveTyped{\methodname'}{\phi} \rightsquigarrow (\T)^{\ast}_{\psi}.\receiveTyped{\methodname'}{\phi\wedge\psi} \\ 
&\rulename{4}~(\T)^{\ast}_{\phi}.(\T)^{\ast}_{\psi} \rightsquigarrow (\T)^{\ast}_{\phi\wedge\psi}.(\T)^{\ast}_{\psi} 
\quad\rulename{5}~(\receiveTyped{\methodname'}{\phi}.\T.\PutAs{\phi'})^{\ast}_{\psi} \rightsquigarrow (\receiveTyped{\methodname'}{\phi\wedge\psi}.\T.\PutAs{\phi'\wedge\psi})^{\ast}_{\psi}
\end{align*}\normalsize
\end{definition}
Since loop invariants have to hold \emph{before} the first repetition, rule $\rulename{2}$ ensures that the last process before a repetition satisfies the invariant when terminating.
Rule $\rulename{3}$ adds an invariant to the next process, as the invariant also holds \emph{after} the last repetition. Rule $\rulename{4}$ is another case of the first one, in case two repetitions are succeeding each other. 
Finally, rule $\rulename{5}$ adds the invariant to the processes inside the repetition. 
This rule enables the use of the invariant in the first method of the repetition and ensures that the last method reestablishes the invariant. 

For the translation into constraints, first-order constraints are not expressive enough. The Kleene star constraint resembles regular languages and we thus use \emph{monadic second order logic} (MSO) to capture repetition.
MSO extends first-order logic with a quantifier $\exists Y \subseteq Z$ which quantities over subsets of $Z$  and a $\in$ primitive to express membership of those sets.
The extension of relativization is straightforward~\cite{henkin}.
We now extend the semantics of types as constraints from Def.~\ref{def:msosem} to repetition:
\begin{definition}[Semantics of Repetition]
The semantics of repeated types uses a set of boundary indices $X$, between which the inner translation must. Also, the invariant has to hold at every boundary.
\small\begin{align*}
\histype\big((\G^{\ast}_{\phi})\big) ~=~&\exists X \subseteq \mathbb{N}.~\exists i,j \in X.~ \big(\forall k \in \mathbb{N}.~ i < k \leq j\big)\wedge \forall i \in X.~ \config(i) \models \phi\wedge \\
&\forall i,j \in X.~ \Big(\big(\forall k\in X.~ k \geq j \vee k \leq i\big) \Rightarrow \big(\histype(\G)\big)[n \in\mathbb{N} / i < n \leq j]\Big)
\end{align*}\normalsize
\end{definition}

\noindent The typing rule for repetition resembles invariant rules from Hoare calculi~\cite{hoare}: 
\[
\INFER{T-While}
{
    \deduce{
         \phi\wedge\mathsf{Post}(\Prcl),\xabs{skip} \proves \statement \hastype \T \has E' \qquad E = E' \cup E''
    }{
         \phi\wedge\mathsf{Post}(\Prcl),\xabs{skip} \proves \statement' \hastype \T' \has E'' \qquad \Phi \Rightarrow [\statement'']\phi \qquad \phi\wedge\mathsf{Post}(\Prcl)\Rightarrow [\statement]\phi
    }
}
{
    \Phi, \statement'' \proves \whileABS \expression\ \{\statement\}; \statement' \hastype (\T)^{\ast}_{\phi}\nnn\T' \has E
}
\]
The first premise continues the type checking of the program, in an environment where only the information in the invariant (and the global information in \postConditions, as defined in Section~\ref{sec:type}) is available.
The second and third premises check that the invariant holds initially and is preserved by the loop body.
The forth premise checks the loop body and the last premise combines the derived causality edges.
The extension of the causality graph is described in~\cite{thesis}.
\begin{corollary}
Theorem~\ref{thm:main} holds for the system with repetition.
\end{corollary}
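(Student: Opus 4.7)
The plan is to extend the inductive proof of Theorem~\ref{thm:main} to cover the new constructs: the \xabs{while} statement, the repetition types $(\G)^{\ast}_{\phi}$ and $(\T)^{\ast}_{\phi}$, the new projection and propagation rules, and the \rulename{T-While} typing rule. I would address the two claims of the theorem separately.

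For deadlock freedom, the argument reduces to verifying that the extended causality graph (whose construction for loops is given in~\cite{thesis}) remains admissible whenever the program is well-typed. The new repetition nodes can be viewed as a finite unfolded structure in which edges within one iteration mirror the sequential case already handled, while edges between consecutive iterations are added through Points-To. Since deadlock-freedom for Active Objects is equivalent to cycle-freedom of the causality graph, the original argument carries over once I check that no new cycles are introduced by the loop-generated nodes. This relies on the projection requirement that each repeatedly called method terminates within the iteration (enforced by appending $\kend$ to the inner projection) and on the one-directional flow of edges between successive unfoldings.

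For protocol adherence, I would proceed by structural induction on $\Prcl$, reusing the base cases and the cases for calls, reads and sequencing from Theorem~\ref{thm:main}. The new case is $(\G)^{\ast}_{\phi}.\G'$: given a terminating run of a well-typed program, I would construct the boundary set $X \subseteq \mathbb{N}$ required by the MSO semantics of repetition by collecting the configuration indices at which each iteration begins and the index at which the loop is exited. The \rulename{T-While} premise $\phi\wedge\postConditions(\Prcl) \Rightarrow [\statement]\phi$ guarantees that $\phi$ holds at every such boundary, while the induction hypothesis applied to the loop body yields $\histype(\G)$ restricted to each sub-interval. Propagation rules~(2)--(5) feed this invariant into the pre- and postconditions of the adjacent processes, so the translation remains consistent with the propagated object and method types that were checked in \rulename{T-Object}.

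The main obstacle will be constructing a single boundary set $X$ that is coherent \emph{across all objects participating in the loop}. Because the MSO constraint is relativized per object (by the generalization of Def.~\ref{def:relative} to MSO), one must argue that when several roles participate in the same repetition (some being repeatedly called, others repeatedly acting as caller or reader) their iteration boundaries align in a single globally defined partition of the trace. The proof must exploit the admissibility of the causality graph to serialize events so that every object finishes its contribution to iteration $n$ before iteration $n+1$ starts for any object, thereby providing the witness $X$ uniformly. Once this alignment lemma is established, the MSO formula for $(\G)^{\ast}_{\phi}$ is satisfied, and combined with the extended deadlock-freedom argument the corollary follows.
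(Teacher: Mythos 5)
Your proposal matches the paper's (sketched) treatment: the appendix folds repetition directly into the soundness argument for Theorem~\ref{thm:main}, handling the loop propagation rules (2)--(5) by the same boundary-set argument over the MSO semantics of $(\T)^{\ast}_{\phi}$ (choosing the indices of the termination/start events of each iteration as the set $X$ and using the invariant premises of \rulename{T-While}), and deferring the causality-graph extension for loops to the cited thesis -- exactly the decomposition you outline. The cross-object alignment of the boundary set $X$ that you flag as the main obstacle is a genuine subtlety the paper itself does not spell out, but your plan to derive the alignment from admissibility of the causality graph is consistent with how the paper argues correct method ordering, so the proposal is sound.
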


\COMMENT{
\begin{example}
The code in Figure~\ref{fig:repfig} implements the behavior of the GUI and the Server from Example~\ref{ex:rep} and can be checked with the types from Example~\ref{ex:app2}.
\begin{figure}
\begin{abscode}
object U{
 List<Int> expect = Nil;
 Unit run(List<Int> l){ S!comp(l); expect = l;}
 Bool update(Int val, Int res){    
   return updateScreen(expect,val,res); //crashes if called in wrong state
 }
}
object S{
 Unit comp(List<Int> l){
    Int i = 0;
    while(i < length(l)){
      Int res = compute(l[i]); //dummy
      Fut<Bool> b = U!update(l[i],res);
      b.get; i++;
    }
 }
}
\end{abscode}
\caption{The code for Example~\ref{ex:rep}, with loop.}
\label{fig:repfig}
\end{figure}

\end{example}
}

\section{Branching}\label{sec:branch}
Active Object have multiple ways to communicate the choice how to continue the protocol and how an object reacts on it:
\begin{itemize}
\item
(1) The choice is communicated via method selection, i.e., each branch corresponds to a different method call.
\item 
(2) The choice is communicated via futures, i.e., other objects must react to the choice of an object by reading its future.
\item 
(3) The choice is communicated via the heap memory, i.e., processes must behave according to some condition for the memory.
\end{itemize}

We aim to stick with standard imperative statements and must regard the restriction that an \abs{if} statement can only choose between two branches, while a protocol may describe more than two.
In our analysis of branching, choice is communicated:
\begin{itemize}
\item
(1) method calls and condition on the passed data for new process running on other objects
\item 
(2) the condition on future for already running processes running on (possibly) other objects.
\item 
(3) via post-conditions to processes running later on the same object.
\end{itemize}
\begin{definition}[Syntax with Branching]
\begin{align*}
\statement &::= \dots \sep \xabs{if}(\expr)\statement~\xabs{else}~\statement~\xabs{fi}\qquad
\G ::= \dots \sep \objname\big\{\obligation{\phi_i}  ,   \big(\objname_{ij}\obligation{\phi_{ij}}\big)_{j\in J} ; \G_i \big\}_{i \in I}\\
\T &::= \dots \sep  \select \{\T_i\}_{i \in I} \sep \offer \{\objname.\methodname\obligation{\phi_i} ;  \T_i\}_{i\in I}
\end{align*}
\end{definition}
The global type $\objname\big\{\obligation{\phi_i}  ,   \big(\objname_{ij}\obligation{\phi_{ij}}\big)_{j\in J} ; \G_i \big\}_{i \in I}$ describes that $\objname$ chooses a branch $\G_i$.
The formulas $\phi_i$ are \emph{additional} postconditions for the choosing process. Other process can read the choice by reading this future. In $\objname_{ij}\obligation{\phi_{ij}}\big)_{j\in J}$,
we describe that the currently active process of $\objname_{ij}$ has the additional postcondition $\phi_{ij}$.
The local type $\select \{\T_i\}_{i \in I}$ is an active choice and $\offer \{\objname.\methodname\obligation{\phi_i} ;  \T_i\}_{i\in I}$ is a passive choice. The branch must be taken by reading the future from $\objname.\methodname$
and evaluating $\phi_i$. 

\begin{definition}[Projection Rules for Branching]
Given the $i$th branch $\obligation{\phi_i}, \big(\objname_{ij}\obligation{\phi_{ij}}\big)_{j\in J} ; \G_i$, we denote the updated \act function
with
\[\act_i =  \act[\objname \mapsto \act(\objname) \wedge \projectphi{\phi_i}{\objname}][\objname_{ij} \mapsto \act(\objname_{ij}) \wedge \projectphi{\phi_{ij}}{\objname_{ij}}]_{j\in J}\]
The auxiliary predicate $\mathsf{allAct}$ states that all mentioned objects and occur in all branches are active and $\mathsf{dist}$ states that a set of formulas does not overlap.
\begin{align*}
\mathsf{allAct} &= \act(\objname) \neq \bot \wedge \bigwedge\nolimits_{\substack{i\in I \\j \in J}}\act(\objname_{ij}) \neq \bot \wedge \forall i,i' \in I.~\forall j.~\objname_{ij} = \objname_{i'j}\\
\mathsf{dist}(\{\phi_1,\dots,\phi_n\}) &= \forall i,j<n.~i \neq j \rightarrow (\phi_i \wedge \phi_j \text{ is unsatisfiable})
\end{align*}
Figure~\ref{fig:branchproj} shows the projection rules for branching.
\end{definition}
The projection rule from global to object-local types has four cases: the first two are straightforward for the choosing process and the currently active reacting processes.
The third case handles objects which behave the same in all branches and the forth handles objects which are active in only one.
The projection on the passive choice moves the $\ReadAs{}$ type from its position after the choice in front of it: The global type has no explicit point where a process terminates, thus the read must be after the choice which adds the postcondition
to the choosing process. However the \abs{get} statement must be before the \abs{if} statement, which relies on the read value in the guard.

\begin{figure}[bth]
\small
\begin{align*}
\project{\Big(\objname\big\{\obligation{\phi_i}  ,&   \big(\objname_{ij}\obligation{\phi_{ij}}\big)_{j\in J} ; \G_i \big\}_{i \in I}\Big)}{\act}{\objname'}
=\\
&\casedin{
\select\{\T_i\}_{i\in I} & \text{if }\objname = \objname' \wedge \mathsf{allAct}
\wedge \project{\G_i}{\act_i}{\objname'} = \T_i\\
\offer\{\objname.\methodname\obligation{\projectphi{\phi_i}{\objname'}}; \T_i\} & \text{if } \objname_{ij} = \objname' \wedge \mathsf{allAct}
\wedge \project{\G_i}{\act_i}{\objname'} = \T_i\\
\T &\text{if }\act(\objname') = \bot \wedge \forall i.~\project{\G_i}{\act_i}{\objname'} = \T\\
\T &\text{if }\act(\objname') = \bot \wedge \exists i.~\project{\G_i}{\act_i}{\objname'} = \T \\
&\hspace{19.5mm}\wedge \forall j\neq i.~\project{\G_j}{\act_j}{\objname'} = \skipT
}
\end{align*}
\begin{align*}
\project{\select \{\T_i\}_{i \in I}&}{\methodname'}{\methodname} =\\ 
&\casedin{
\bigcup_{i\in I} \project{\T_i}{\methodname'}{\methodname} & \text{if }\methodname \neq \methodname'\\
\big\{\select \{\T_i\}_{i \in I}\big\}& \text{if }\methodname = \methodname' \wedge \forall i \in I.~\T_i\upharpoonright_{\methodname'}\methodname = \T_i'
}\\
\project{\offer \{\objname.\methodname\obligation{\phi_i} ;  \T_i\}_{i\in I}&}{\methodname'}{\methodname} =\\
&\casedin{
\bigcup_{i\in I} \project{\T_i}{\methodname'}{\methodname} & \text{if }\methodname \neq \methodname'\\
\big\{\ReadAs{\expr}.\offer \{\objname.\methodname\obligation{\phi_i} ;  \T_i\}_{i\in I} \big\} & \text{if }\methodname = \methodname' \wedge \mathsf{dist}((\phi_i')_{i_\in I})\wedge\\
&\phantom{\text{if }}\forall i \in I.~\T_i\upharpoonright_{\methodname'}\methodname = \T_i' = \ReadAs{\expr}.\T_i''
}
\end{align*}
\caption{Projection Rules for Branching}
\label{fig:branchproj}
\end{figure}

\begin{definition}[Translation into MSO for Branching]
For the translation into MSO constraints, we use the auxiliary predicate $\firstput(i,\objname)$ that states that the $i$th position in the trace refers to the first resolving event from \objname
and the auxiliary predicate $\lastput(i,\objname.\methodname)$ that states that the $i$th position in the trace refers to the last resolving event of $\objname.\methodname$.
\begin{align*}
\firstput(i,\objname) =& \forall j.~ \big(\exists\future.~\exists\methodname.~\exists \expr.~ \ev(j) \doteq \resolvev(\objname,\future,\methodname,\expr)\big) \rightarrow i \leq j\\
\lastput(i,\objname.\methodname) =& \forall j.~ \big(\exists\future.~\exists \expr.~ \ev(j) \doteq \resolvev(\objname,\future,\methodname,\expr)\big) \rightarrow i \geq j\\
\end{align*}
Additionally to the translation of the branches, it encodes that the choosing process terminates before any process that relies on the communication of its choice via the return value.
The rules are as follows:\scriptsize
\begin{align*}
&\histype\Big(\objname\big\{\obligation{\phi_i}  ,   \big(\objname_{ij}\obligation{\phi_{ij}}\big)_{j\in J} ; \G_i \big\}_{i \in I}\Big) =\\
\bigvee\nolimits_{i \in I}\Big(\histype(\G_i) \wedge \exists k.~\firstput(k,\actor) \wedge &\bigwedge\nolimits_{j \in J}\big(\exists k_j.~\firstput(k_{j},\objname_{ij}) \wedge k \geq k_j  \wedge \sigma(h)[k_{j}] \models \phi_{ij}\big)\Big)
\end{align*}\vspace{-4mm}
\begin{align*}
\histype(\select \{\T_i\} ) &= \bigvee_{i}\histype(\T_i) \\
\histype(\offer \{\actor.\method\obligation{\phi_i};\T_i\}) &=
\bigvee_{i}\big(\exists j\in\mathbb{N}.~\lastput(j,\mathsf{p.m}) \wedge \sigma(j) \models \phi_i \wedge \histype(\T_i)\big)
\end{align*}\normalsize
\end{definition}
In the following we present the rules for branching.
The typing rules split the branches into two disjoint sets and shows that the guard of the \ifABS statement together with the added choice-conditions of the branch selects the correct continuation of the type.
Once the sets of branches are singletons, the choice operators can be removed. 
\begin{definition}[Typing Rules]
\[
\INFER{T-Offer}{
\begin{array}{c}
    I = I_1 \cup I_2 \qquad I_1 \cap I_2 = \emptyset \qquad E = E_1 \cup E_2\\
    \forall i \in I_1.~\Phi\wedge\mathsf{post}(\actor.\methodname, \phi_i) \Rightarrow \expression)\\
    \forall i \in I_2.~\Phi\wedge\mathsf{post}(\actor.\methodname, \phi_i) \Rightarrow  \neg\expression)\\
\Phi;\expression;\bigvee_{i\in I_1}\mathsf{post}(\actor.\methodname, \phi_i),\statement \proves \statement';\statement''' \hastype \offer\{\objname.\methodname\obligation{\phi_i};\T_i\}_{i\in I_1} \has E_1 \\
\Phi;\neg\expression;\bigvee_{i\in I_2}\mathsf{post}(\actor.\methodname, \phi_i),\statement \proves \statement';\statement''' \hastype \offer\{\objname.\methodname\obligation{\phi_i};\T_i\}_{i\in I_2} \has E_2
 \end{array}
}{
\Phi,\statement \proves \ifABS \expression\ \thenABS \statement'\ \elseABS \statement'' \ \fiABS\ ;\statement''' \hastype \offer\{\objname.\methodname\obligation{\phi_i};\T_i\}_{i\in I} \has E
}
\]
\[
\INFER{T-Offer-Single}
{
    \Phi,\statement' \proves \statement \hastype \T \has E
}
{
    \Phi,\statement' \proves \statement \hastype \offer\{\actor. \methodname : \phi; \T\} \has E
}
\INFER{T-Select-Single}
{
    \Phi,\statement' \proves \statement \hastype \T \has E
}
{
    \Phi,\statement' \proves \statement \hastype \select\{\T\} \has E
}
\]
\[
\TINFER{T-Select}
{
\deduce{
\Phi;\expression,\statement \proves \statement';\statement''' \hastype \select\{\T_i\}_{i\in I_1} \has E_1 \qquad
\Phi;\neg\expression,\statement \proves \statement'';\statement''' \hastype \select\{\T_i\}_{i\in I_2} \has E_2
}{
\deduce{
}{
I = I_1 \cup I_2 \qquad I_1 \cap I_2 = \emptyset \qquad E = E_1 \cup E_2
}
}
}{
\Phi,\statement \proves \ifABS \expression\ \thenABS \statement'\ \elseABS \statement'' \ \fiABS\ ;\statement''' \hastype \select\{\T_i\}_{i\in I} \has E
}
\]
\end{definition}
The extension of the causality graph is described in~\cite{thesis}.

We use the following example to illustrate how we handle branching.
\begin{example}\label{ex:app}
Consider the scenario: Client \ROLE{\objname_1} wants to access data on server \ROLE{\objname_2} and sends its login data by calling method $\methodfun{acc}$. Then \ROLE{\objname_2} decides. 
If the login data is invalid, 
\ROLE{\objname_2} logs the 
denied access by calling logging server \ROLE{S} and returns $-1$ to \ROLE{\objname_1}; 
if the access succeeds, 
it returns the data, a value $>0$, to $\actor_2$. 
\ROLE{\objname_1} reacts on the return value and returns a boolean indicating whether the access was successful.
This is formalized by the following type:\small
\[
\GInterOn{\mainO}{}{\objname_1}{\methodfun{start}}\nnn\GInterOn{\objname_1}{\xabs{x}}{\objname_2}{\methodfun{acc}}.\ROLE{\objname_2}\casedsclosed{
     \obligation{\mathbf{result} \doteq -1} ~ \ROLE{\objname_1}\obligation{\neg\mathbf{result}} ;~ \glGet{\objname_1}{\xabs{x}}.\GInterOn{\objname_2}{}{S}{\methodfun{log}}.\kend\! \\
     \obligation{\mathbf{result}>0} ~ \ROLE{\objname_1}\obligation{\mathbf{result}} ;~\glGet{\objname_1}{\xabs{x}}.\kend
}
\]\normalsize
The local type for $\objname_1.\methodfun{start}$ is the following. Note that the $\ReadAs{}$ type is now before the branching.
\small\begin{align*}
\receiveUntyped{\methodfun{start}}.\sendUntyped{\ROLE{\objname_2}}{\xabs{x}}{\methodfun{acc}}.\ReadAs{\xabs{x}}.\offer\casedsclosed{\objname_2.\methodfun{acc}\obligation{\mathbf{result} \doteq -1}~;~\PutAs{\neg\mathbf{result}} \\ 
                                                                \objname_2.\methodfun{acc}\obligation{\mathbf{result} > 0} ~;~\PutAs{\mathbf{result}}}
\end{align*}
\normalsize
\end{example}

\section{Conclusion and Related Work}\label{sec:conc}
In this paper we generalize MPST for Active Objects to a two-phase analysis that handles
protocols  where information is not only transmitted \EK{between objects }via asynchronous method calls
but also \EK{inside the object} through the heap memory of Active Objects. 
Additionally, we provide a model-theoretic semantics for MPST, which allows us to give a declarative definition of protocol adherence and integrate further static analyses. 
These analyses are used to reason about method order and future synchronization \EK{within a type system.} 


\subsection{Discussion}
\paragraph{Decidability and Types for Validation.}
The judgment $\proves \prgm \hastype \Prcl$ is undecidable if
the validity of the FO logic used for specifying side-effects is undecidable.
A developer can choose an FOL fragment with decidable validity to trade off expressiveness against analyzability, 
e.g., if the developer 
chooses a more restricted fragment, 
which may limit the expressiveness of the specification, 
then the validity of the FO logic used for specifying side-effects may become decidable.

When using an undecidable FOL fragment, our approach can be used as a \emph{validation} tool 
to check whether the implemented (sub-)system will be behaving as expected.
Our approach can be integrated into the development process similarly as invariant-based approaches, and 
applies techniques proposed by MPST to connect global and local views of concurrent programs,
a notoriously difficult problem when using contracts and invariants~\cite{Din14}.

\paragraph{Protocol Adherence.}
Current work on MPST defines protocol adherence as a fidelity theorem,
which states that every sequence of interactions in a session follows the scenario declared in MPST~\cite{Honda1} as follows: 
An operational semantics for types is defined and it is shown that the semantics of the language is a refinement of the semantics of the types. 
Similarly, behavioral contracts~\cite{Castagna} define protocol adherence by \emph{compliance}, which compares the interaction of contracts.
These are \emph{operational} approaches to specification. 
We define protocol adherence from a \emph{declarative} perspective by requiring a logical \emph{property} to hold for all traces of a well-typed program.
A declarative specification can be analyzed with tools for logical specification, 
and can enable easier integration of other static analysis tools (e.g., to consider state), because they are only required to have a logical characterization. 
\vspace{-2mm}
\subsection{Related Work}
This work extends our previous system for Active Objects~\cite{icfem}, which could not specify and verify state, required an additional 
verification step for the scheduler and explicit termination points within the global type.
\paragraph{Actors and Objects.}
Crafa and Padovani~\cite{crafa,arx} investigate behavioral types for the object-oriented join calculus with typestate, a concurrency model similar to actors. 
Gay et al.~\cite{oo2} model channels as objects, integrating MPST with classes; 
Dezani-Ciancaglini et al.~\cite{oo1} use MPST in the object-oriented language \texttt{MOOSE}, 
where types describe communication through shared channels. 
We ensure deadlock freedom 
similarly to Giachino et al.~\cite{Giachino20162,Giachino2016}, who ensure deadlock freedom by inferring behavioral \emph{contracts} and applying a cycle detection algorithm; 
however, they do not consider protocol adherence. 

\paragraph{State and Contracts.}
Bocchi et al.~\cite{Logic,contract1,Bocchi} develop a MPST discipline with assertions for endpoint state.
The work considers neither objects nor heap memory. 
The specifications use \emph{global values} in global types and require complex checks for \emph{history-sensitivity} and \emph{temporal-sensitivity} to ensure that an endpoint proves its obligations.
We evade this by specifying inherently class-local memory \emph{locations}. 
They explicitly track values over several endpoints, while 
we implicitly do so by equations over locations.
In a stateless setting, 
Toninho and Yoshida use dependent MPST~\cite{TONINHO201761} to reason about 
passed data.
\vspace{-2mm}
\paragraph{Logics.}
Session types as formulas have been examined by Caires et al.~\cite{Caires} and Carbone et al.~\cite{arbiter} for intuitionistic and linear logics as types-as-proposition for the $\pi$-calculus. 
\EK{Our work uses logic not for a \emph{proof-theoretic} types-as-proposition theorem, but to use a \emph{model-theoretic} notion of protocol adherence and to integrate static analysis and dynamic logic.
Lange and Yoshida~\cite{Lange} also characterize session types as formulas, but their characterization characterizes the \emph{subtyping} relation, not the execution traces as in our work.}
\COMMENT{
\paragraph{Future Work.}
We plan to add further features of Active Objects, e.g., object creation and process suspension. 
To handle concurrency models with suspending processes, we plan to integrate the \emph{May-Happen-In-Parallel}-analysis~\cite{MHP}. 
}
\subsubsection*{Acknowledgments} This work is partially supported by \texttt{FormbaR}, part of the Innovation Alliance between TU Darmstadt and Deutsche Bahn AG.

\clearpage
\bibliographystyle{abbrv}
\bibliography{bibliography}
\clearpage
\appendix

\section{Full Definitions}\label{sec:def}
\begin{definition}[Weakening]
Weakening is defined as \[\projectphi{\phi}{\objname} =  \underbrace{\exists~\type_1~v_1,\dots,\exists~\type_n~v_n.}_{\{v_1,\dots,v_n\} = \mathsf{free}(\phi)}\widehat\phi_{\objname}\]
where the set of all free variables in $\phi$ is denoted with $\mathsf{free}(\phi)$ 
and $\widehat{\cdot}_{~\objname}$ 
is defined as:\\
\begin{align*}
\widehat{\exists v.~\phi}_{\objname} &= \exists v.~\widehat{\phi}_{\objname}\quad
\reallywidehat{p(t,..,t_n)}_{\objname} = p(\widehat{t}_{\objname},..,\widehat{t_n}_\objname)\quad
\widehat{\neg\phi}_{\objname} = \neg\widehat{\phi}_{\objname}\quad
\widehat{\phi\vee\psi}_{\objname} = \widehat{\phi}_{\objname} \vee \widehat{\psi}_{\objname}\\
\widehat{\logicfunction}_{\objname} &= \cased{
\logicfunction & \text{ if $\logicfunction$ is a function symbol of $\objname$}\\
v_\logicfunction & \text{ otherwise, where $v_\logicfunction$ is a fresh logical variable with the type of $\logicfunction$}}
\end{align*}
\end{definition}

\begin{example}[Weakening]
Let $\field$ be a field, \objname an object and \abs{i} the parameter of some method in class $\objname$. 
Consider $\phi = \objname.\field > 0 \wedge \text{\abs{i}} > \objname.\field$.
The formula $\phi$ is a $\objname$-formula, as $\phi = \projectphi{\phi}{\objname}$.
The weakening for some object $\objname'$ is $\projectphi{\phi}{\objname'} = \exists \text{\abs{Int}}~a.a > 0 \wedge \text{\abs{i}} > a$. 
The following (valid) $\objname'$-formula describes that if $\projectphi{\phi}{\objname'}$ holds in some state, then after executing $\text{\abs{j = i*2;}}$ the program reaches a state, where the variable $j$ contains a positive value:
\[\exists \text{\abs{Int}}~a.~\big(a > 0 \wedge \text{\abs{i}} > a \Rightarrow [\text{\abs{j = i*2;}}] \text{\abs{j}} > 0\big)\]
While $\objname'$ can not reason about the value of $\objname.\field$, the weakening allows to carry over the information that the parameter is larger than 1.
\end{example}

\begin{definition}[Relativation] \label{def:relatvie}
Let $\phi$ be a MSO constraint with a free variable $x$ of type $\type$ and $\psi$ a MSO constraint.
We denote the relativization of $\psi$ with $\phi$ by $\psi[ x\in Z / \phi ]$. 
For all quantifiers of type $\type$ in formula $\psi$, relativization adds $\phi(x)$ as restrictions into $\psi$.
The construction is defined with the following rules:\small
\begin{align*}
(\exists y \in Z.\psi)[ x\in Z / \phi ] =& \exists y\in Z. \phi(y) \wedge \psi [ x\in Z / \phi ]\\
(\exists Y \subseteq Z.\psi)[ x\in Z / \phi ] =& \exists Y \subseteq Z.~\big((\forall y\in Y. \phi(y)) \wedge \psi [ x\in Z / \phi ]\big)\\
(\phi \wedge \psi)[ x\in Z / \phi ] =& \phi[ x\in Z / \phi ] \wedge \psi[ x\in Z / \phi ]\\
(\neg \phi)[ x\in Z / \phi ] =& \neg(\phi[ x\in Z / \phi ])\\
(p(t_1,\dots,t_n))[ x\in Z / \phi ] =& p(t_1[ x\in Z / \phi ],\dots,t_n[x\in Z / \phi ])\\
(f(t_1,\dots,t_n))[ x\in Z / \phi ] =& f(t_1[ x\in Z / \phi ],\dots,t_n[x\in Z / \phi ])
\end{align*}
\normalsize
\end{definition}

\begin{example}[Relativation]
Consider a graph $(V,E,\mathsf{c})$ with one predicate \textsf{c} over its nodes. I.e., at every node $n$, the predicate $\mathsf{c}(n)$ either holds or not.
The formula  $\psi = \forall n \in V.~\mathsf{c}(n)$ expresses that \textsf{c} holds everywhere:
The following formula expresses that $x$ has an out-degree of at most 1:
\[\phi(x) = \forall y,z \in V.~E(x,y) \wedge E(x,z) \Rightarrow z \doteq y\]
The following formula restricts $\psi$ on the subgraph described by $\phi$, i.e. it expresses that at all nodes with an out-degree of at most 1, \textsf{c} holds:
\begin{align*}
\psi[x \in V\setminus\phi] = 
\forall n \in V.~\big(\forall y,z \in V.~E(n,y) \wedge E(n,z) \Rightarrow z \doteq y\big) \Rightarrow \mathsf{c}(n)
\end{align*}
\end{example}

\subsection*{Auxiliary Predicates}
\[\isput(i) = \exists \ROLE{A}.~ \exists f.~\exists \method.~\exists e \in D.~\ev(i) \doteq \resolvrev(\ROLE{A},f,e)\]

\subsection*{Projection on Objects}\footnotesize
\begin{align*}
\project{\GInterOn{\mathbf{main}}{}{\objname_2}{\methodname\obligation{\phi}}.\G}{\act}{\objname_1} &=
\receiveTyped{\methodname}{\projectphi{\phi}{\objname_2}} . (\project{\G}{\act[\objname_2 \mapsto \psi]}{\objname_1}) \text{ if $\act = \act_\bot \wedge \objname_2 = \objname_1$}\\ 
\project{\GInterOn{\mathbf{main}}{}{\objname_2}{\methodname\obligation{\phi}}.\G}{\act}{\objname_1} &=
\skipT. (\project{\G}{\act[\objname_2 \mapsto \psi]}{\objname_1}) \text{ if $\act = \act_\bot \wedge \objname_2 \neq \objname_1$}\\ 
\project{\GInterOn{\objname_1}{\var}{\objname_2}{\methodname\obligation{\phi,\psi}}. \G}{\act}{\objname_1} 
&=
  \sendTyped{\objname_2}{\var}{\methodname}{\phi} . (\project{\G}{\act[\objname_2 \mapsto \psi]}{\objname_1}) \text{ if } \act(\objname_1) \neq \bot \wedge \phi= \projectphi{\phi}{\objname_1}\\
\project{\GInterOn{\objname_1}{\var}{\objname_2}{\methodname\obligation{\phi,\psi}}. \G}{\act}{\objname_2} 
&=
  \casedin{
  \receiveTyped{\methodname}{\projectphi{\phi}{\objname_2}} . (\project{\G}{\act[\objname_2 \mapsto \psi]}{\objname_1})                           & \text{ if } \act(\objname_2) = \bot \\
  \PutAs{\act(\objname_2)}.\receiveTyped{\methodname}{\projectphi{\phi}{\objname_2}}. (\project{\G}{\act[\objname_2 \mapsto \psi]}{\objname_1})   & \text{ if } \act(\objname_2) \neq \bot\\
  }\\  
\project{\GInterOn{\objname_1}{\var}{\objname_2}{\methodname\obligation{\phi,\psi}}. \G}{\act}{\objname} 
&=
  \skipT. (\project{\G}{\act[\objname_2 \mapsto \psi]}{\objname}) \text{ if } \objname_2 \neq \objname \neq \objname_1 \\
  \project{(\glGet{\objname_1}{\expr}.\G')}{\act}{\objname} 
  &=
  \casedin{
  \ReadAs{\expr} . (\project{\G}{\act[\objname_2 \mapsto \psi]}{\objname})
  & \text{ if }\act(\ROLE{\objname_1}) \neq \bot  \wedge \ROLE{\objname_1} = \objname\\
  \skipT. (\project{\G}{\act[\objname_2 \mapsto \psi]}{\objname})
  & \text{ if } \ROLE{\objname_1} \neq \objname
  }\\
  \project{\kend}{\act}{\objname} 
  &=
\casedin{
  \PutAs{\act(\objname)}.\kend
  & \text{ if } \act(\objname) \neq \bot\\
  \kend
  & \text{ if } \act(\objname) = \bot\\
}
\end{align*}
\begin{align*}
\project{ \big((\G&)^{\ast}_{ \phi }.\G'\big) }{\act}{\objname} =\\ 
&\casedin{
\PutAs{\act(\objname)}  .  (\T'')^{\ast}_{ \projectphi{\phi}{\objname}  } . \T'
&\begin{array}{l}
\text{if }\T \neq \skipT \wedge 
\mathit{rcv}(\actor,\G) \wedge \act(\actor)\neq \bot \wedge \mathit{cs}(\phi)
\end{array}\\
(\T)^{\ast}_{ \projectphi{\phi}{\objname}  } . \T'
&\begin{array}{l}
\text{if }\T \neq \skipT \wedge 
\neg\mathit{rcv}(\objname,\G) \wedge \act(\objname)\neq \bot \wedge \mathit{cs}(\phi)
\end{array}\\
\T'&\text{if }\T = \skipT \wedge \mathit{cs}(\phi)
}\\ 
&\text{Where $\project{\G.\kend}{\act[\objname \mapsto \bot]}{\objname} = \T''$,$\project{\G}{\act}{\objname} = \T$ and $\project{\G'}{\act}{\objname} = \T'$}
\end{align*}
\[\mathit{cs}(\phi) = (\bigwedge_{\objname\in\mathit{objects}(\phi)}{\projectphi{\phi}{\objname}}) \rightarrow \phi\]


\subsection*{Projection on Methods}
\noindent\small\begin{minipage}{0.49\textwidth}
\[\receiveTyped{\methodname}{\phi}\upharpoonright_{\methodname'}\methodname = \casedin{ \{(\receiveTyped{\methodname}{\phi}, \methodname)\} & \text{ if }\methodname' = \bot \\ \{(\skipT,\methodname')\} & \text{ otherwise}}\]
\end{minipage}
\begin{minipage}{0.49\textwidth}
\[\PutAs{\phi}\upharpoonright_{\methodname'}\methodname = \casedin{\{ (\PutAs{\phi}, \bot)\} & \text{ if }\methodname = \methodname' \\ \{(\skipT,\bot)\} & \text{ otherwise}}\]
\end{minipage}
\begin{align*}
\ReadAs{e}\upharpoonright_{\methodname'}\methodname &= \casedin{ \{(\ReadAs{e}, \methodname')\} & \text{ if }\methodname = \methodname' \\ \{(\skipT,\methodname')\} & \text{ otherwise}}\\
\sendTyped{\objname_1}{v}{\methodname''}{\phi}\upharpoonright_{\methodname'}\methodname &= \casedin{ \{(\sendTyped{\objname_1}{v}{\methodname''}{\phi}, \methodname')\} & \text{ if }\methodname = \methodname'  \\ \{(\skipT,\methodname')\} & \text{ otherwise}}\\
\skipT \upharpoonright_{\methodname'}\methodname                &= \{(\skipT, \methodname')\} \qquad\qquad\kend \upharpoonright_{\methodname'}\methodname  = \{(\skipT, \methodname')\} \text{if }\methodname' = \bot\\
(\T_1.\T_2)\upharpoonright_{\methodname'}\methodname                &= 
\begin{array}{ll}
    &\{(\T_1'.\T_2',\methodname''') \sep (\T_1',\methodname'') \in \T_1\upharpoonright_{\methodname'} \wedge \methodname'' \neq \bot \wedge (\T_2',\methodname''') \in \T_2\upharpoonright_{\methodname''}\}\\
    \cup&\{(\T_2',\methodname''') \sep  (\T_1',\bot) \in \T_1\upharpoonright_{\methodname'}  \wedge (\T_2',\methodname''') \in \T_2\upharpoonright_{\bot}\}
\end{array}\\
(\T)^{\ast}_{\phi} \upharpoonright_{\methodname''}\methodname'           &= 
\casedin{ 
    \big\{\big((\T \upharpoonright_{\methodname''}\methodname')^{\ast}_{\phi}, \methodname''\big)\big\} & \text{ if } \methodname'' = \methodname' \wedge \T \upharpoonright_{\methodname''}\methodname' \neq \{(\skipT,\methodname''')\}\\ 
    \T \upharpoonright_{(methodname''}\methodname' & \text{ if } \methodname'' \neq \methodname' \wedge \T \upharpoonright_{\methodname''}\methodname' \neq \{(\skipT,\methodname''')\}\\ 
    \{(\skipT,\methodname'')\} &\T \upharpoonright_{\methodname''}\methodname' = \{(\skipT,\methodname''')\}
}
\end{align*}\normalsize

\subsection*{Translation into Constraints}
\small
\begin{align*}
\histype(&\GInterOn{\mainO}{}{\ROLE{\objname_2}}{\methodname \obligation{\psi}}.\G) ~=~\exists j,k .~\exists f.~\exists \expr'.~\event(j) \!\doteq\! \invocrev(\objname_2,f,\methodname) \wedge \config(j) \!\models\! \projectphi{\phi}{\objname_2} \wedge\\
&\event(k) \!\doteq\! \resolvev(\objname_2,f,\expr') \wedge \config(k) \!\models\! \psi  \wedge \forall l . l\!\neq\! j \wedge l\!\neq\! k \Rightarrow \isput(l) \wedge \histype(\G)\\
\histype(&\GInterOn{\ROLE{\objname_1}}{\var}{\ROLE{\objname_2}}{\methodname \obligation{\phi,\psi}} ) ~=~\exists i,j,k .~\exists f.~\exists \expr, \expr'.~\\
&\hspace{2mm}\event(i) \doteq \invocev(\objname_1,\objname_2,f,\methodname,\expr) \wedge \config(i) \models \phi \wedge \event(j) \doteq \invocrev(\objname_2,f,\methodname) \wedge \config(j) \models \projectphi{\phi}{\objname_2} \wedge\\
&\hspace{2mm}\event(k) \doteq \resolvev(\objname_2,f,\expr') \wedge \config(k) \models \psi \wedge \config(i) \models \objname_1.\xabs{x} \doteq f \wedge \forall l .~l\neq i \wedge l\neq j \wedge l\neq k \Rightarrow \isput(l) \\
\histype(&\GInterOn{\ROLE{\objname_1}}{}{\ROLE{\objname_2}}{\methodname \obligation{\phi,\psi}} ) ~=~\exists i,j,k .~\exists f.~\exists \expr, \expr'.~\\
&\hspace{2mm}\event(i) \doteq \invocev(\objname_1,\objname_2,f,\methodname,\expr) \wedge \config(i) \models \phi \wedge \event(j) \doteq \invocrev(\objname_2,f,\methodname) \wedge \config(j) \models \projectphi{\phi}{\objname_2} \wedge\\
&\hspace{2mm}\event(k) \doteq \resolvev(\objname_2,f,\expr') \wedge \config(k) \models \psi \wedge \wedge \forall l .~l\neq i \wedge l\neq j \wedge l\neq k \Rightarrow \isput(l) \\
\histype(&\glGet{\objname}{\expr})~=~ \exists i.~\exists f.~\exists \expr'.~\exists \objname'.~\event(i) \doteq \resolvrev(\objname,\objname',f,\expr') \wedge \config(i) \models \expr \doteq f \wedge \forall l.~l\neq i \Rightarrow \isput(l) \\
\histype(&\G_1 . \G_2) =\bigwedge_{\objname}\big(\exists i\in\mathbb{N}.~\histype(\G_1)[j\in\mathbb{N}/\actC(j,\objname) \Rightarrow j < i] \wedge \histype(\G_2)[j\in\mathbb{N}/\actC(j,\objname) \Rightarrow j \geq i]\big)\\
\histype\big(&(\G^{\ast}_{\phi})\big) ~=~\exists X \subseteq \mathbb{N}.~\exists i,j \in X.~ \big(\forall k \in \mathbb{N}.~ i < k \leq j\big)\wedge \forall i \in X.~ \config(i) \models \phi\wedge \\
&\forall i,j \in X.~ \Big(\big(\forall k\in X.~ k \geq j \vee k \leq i\big) \Rightarrow \big(\histype(\G)\big)[n \in\mathbb{N} / i < n \leq j]\Big) \\
\histype(&\kend) = \mathbf{true}\\
\end{align*}
The local translation for some object \objname is:
\begin{align*}
\histype(\T_1 \nnn \T_2)  ~&=~ \exists i \in \mathbb{N}.~\histype(\T_1)[n\in\mathbb{N}/ n< i] \wedge \histype(\T_2)[n \in \mathbb{N}/ n \geq i]\\
\histype(\receiveTyped{\methodname}{\phi}) &= \exists i.~ \forall j.~ i = j \wedge \exists f.~\invocrev(\objname,f,\methodname) \wedge \config(i) \models \projectphi{\phi}{\objname}\\
\histype(\sendTyped{\objname'}{\var}{\methodname}{\phi}) &= \exists i.~ \forall j.~ i = j \wedge \exists f,\expr.~\invocev(\objname,\objname',f,\methodname,\expr) \wedge \config(i) \models \projectphi{\phi}{\objname} \wedge \config(i) \models (\xabs{x}\doteq f)\\
\histype(\sendTyped{\objname'}{}{\methodname}{\phi}) &= \exists i.~ \forall j.~ i = j \wedge \exists f,\expr.~\invocev(\objname,\objname',f,\methodname,\expr) \wedge \config(i) \models \projectphi{\phi}{\objname}\\
\histype(\PutAs{\phi}) &= \exists i.~ \forall j.~ i = j \wedge \exists f,\expr.~ \resolvev(\objname,f,\expr) \wedge \config(i) \models \projectphi{\phi}{\objname}\\
\histype(\ReadAs{\expr}) &= \exists i.~ \forall j.~ i = j \wedge \exists f.~\resolvrev(\objname,\objname',f,\expr') \wedge \expr = f\\
\histype(\T^{\ast}_{\phi}) ~=~&\exists X \subseteq \mathbb{N}.~\exists i,j \in X.~ \big(\forall k \in \mathbb{N}.~ i < k \leq j\big)\wedge \forall i \in X.~ \config(i) \models \phi\wedge \\
&\forall i,j \in X.~ \big(\forall k\in X.~ k \geq j \vee k \leq i\big) \rightarrow \big(\histype(\T)\big)[x \in\mathbb{N} / i < x \leq j] \\
\histype(\kend) &= \mathbf{true}\\
\end{align*}
\subsection*{Typing Rules}
\small
$\TINFER{T-Main}{
\deduce{
\exists j \leq n.~\Prcl = \GInterOn{\xabs{main}}{}{\objname_j}{\methodname\obligation{\phi}}.\G \quad 
\forall i\leq n.~\mathsf{Post}(\Prcl) \proves \objlang_i \hastype \prop^{\ast}(\project{\Prcl}{}{\objname_i}) \has E_i
}{
\objlang_i = \xabs{object}~\objname_i \{\dots\} \quad i\!
\mathsf{Roles}(\Prcl) = \{\objname_1,\dots,\objname_n\} \qquad \caus(\Prcl) + \bigcup_{i\leq n}E_i\text{ admissible}
}
}{
\proves \objlang_1\quad\dots\quad\objlang_n \quad \xabs{main}\{\objname_j!\methodname()\} \hastype \Prcl 
}
$
\vspace{2mm}
$\TINFER{T-Object}{
    \deduce{
\forall i \leq n.~ \Phi,\phi_i,\xabs{skip} \proves \statement_i \hastype \T_i \has E_i \qquad E = \bigcup\nolimits_{i \leq n}E_i
    }{
\forall i \leq n.~ \project{\LPrcl}{\act}{\methodname_i} = \receiveTyped{\methodname_i}{\phi_i}.\T_i
    }
}{
\Phi \proves \xabs{object}~\objname\{\type_1~\methodname_1(\many{\type~\var})\{\statement_1\}  ~\dots ~
\type_n~\methodname_n(\many{\type~\var})\{\statement_n\} \quad \many{\type~\var=\expr}\} \hastype \LPrcl \has E
}
$

\vspace{2mm}
$\TINFER{T-Return}{
\Phi \Rightarrow [\statement;\returnABS~\expr]\phi 
}{
\Phi, \statement \proves \returnABS~\expr \hastype \PutAs{\phi} \has E
}
\qquad
\TINFER{T-Call}{
\deduce{
\Phi \Rightarrow [\statement;\type~\var = \objname!\methodname(\many{\expr})]\phi
}{
\Phi, \statement;\type~\var = \objname!\methodname(\many{\expr}) \proves \statement' \hastype \T \has E
}
}{
\Phi, \statement \proves \type~\var = \objname!\methodname(\many{\expr});\statement' \hastype 
\sendTyped{\objname}{\var}{\methodname}{\phi}.\T \has E
}
$

\vspace{2mm}
$
\TINFER{T-Call-2}{
\deduce{
\Phi \Rightarrow [\statement;\objname!\methodname(\many{\expr})]\phi
}{
\Phi, \statement;\objname!\methodname(\many{\expr}) \proves \statement' \hastype \T \has E
}
}{
\Phi, \statement \proves \objname!\methodname(\many{\expr});\statement' \hastype 
\sendTyped{\objname}{}{\methodname}{\phi}.\T \has E
}
$
$
\TINFER{T-Assign}{
\Phi, \statement;\type~\var = \expr \proves \statement' \hastype \T \has E'
}{
\Phi, \statement \proves \type~\var = \expr;\statement' \hastype \T \has E
}
$

\vspace{2mm}
$
\TINFER{T-Get}{
\deduce{
E = E' \cup \{(n,n') | \exists \methodname \in \mathsf{p2}(\expr).~ n \in \mathsf{term}(\methodname) \wedge n' \in \mathsf{node}(\statement;\expr.\getABS)\}
}{
\Phi, \statement;\type~\var = \expr.\getABS \proves \statement' \hastype \T \has E'
}
}{
\Phi, \statement \proves \type~\var = \expr.\getABS;\statement' \hastype 
\ReadAs{\expr}.\T \has E
}
$

\vspace{2mm}
$
\INFER{T-While}
{
    \deduce{
         \phi\wedge\mathsf{Post}(\Prcl),\xabs{skip} \proves \statement \hastype \T \has E' \qquad E = E' \cup E''
    }{
         \phi\wedge\mathsf{Post}(\Prcl),\xabs{skip} \proves \statement' \hastype \T' \has E'' \qquad \Phi \Rightarrow [\statement'']\phi \qquad \phi\wedge\mathsf{Post}(\Prcl)\Rightarrow [\statement]\phi
    }
}
{
    \Phi, \statement'' \proves \whileABS \expression\ \{\statement\}; \statement' \hastype (\T)^{\ast}_{\phi}\nnn\T' \has E
}
$



\section{Soundness}\label{sec:proof}
The proof for Theorem 1 is similar to the proof for Theorem 2 in~\cite{thesis}, we thus only give a sketch and point out where the proofs differ.

\subsection*{Propagation}
First, we state the correctness of propagation.
Let $\trace\upharpoonright_\objname$ be the projection of trace $\trace$ on $\objname$, i.e., $\trace\upharpoonright_\objname$ results from $\trace$ by replacing all events not issued by $\objname$.
\begin{lemma}\label{lem:prop}
Let $\prgm$ be a program and $\Prcl$ a type for $\prgm$.
If in all traces produced by $\prgm$, the order of invocation events is the same, then every trace that satisfies the translation of the propagated type iff it satisfies the translation of the original type:
\small\begin{align*}
\vdash\prgm:\Prcl \rightarrow \forall \trace.~\forall\objname.~\prgm\Downarrow \trace \rightarrow 
\big(\trace\upharpoonright_\objname \models \histype(\prop^{\ast}(\project{\Prcl}{}{\objname})) \leftrightarrow \trace\upharpoonright_\objname \models \histype(\project{\Prcl}{}{\objname}) \big)\\
\end{align*}\normalsize
\end{lemma}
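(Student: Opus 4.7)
The plan is induction on the number of rewriting steps applied in computing $\prop^{\ast}(\project{\Prcl}{}{\objname})$. The base case (zero rewriting steps) gives syntactic equality and the equivalence is trivial. For the inductive step, it suffices to prove that a single application of the propagation rule \rulename{1}---which replaces some sub-type $\PutAs{\phi}.\receiveTyped{\methodname}{\psi}$ by $\PutAs{\phi}.\receiveTyped{\methodname}{\psi \wedge \phi@\objname}$---preserves the set of traces $\trace\upharpoonright_\objname$ satisfying the translation.

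The key technical tool I would first establish is a heap-stability property: for any trace $\trace$ of a well-typed program $\prgm$ and any object $\objname$, if positions $j < k$ in $\trace$ are such that $\event(j) = \resolvev(\objname,f,\methodname',\expr)$, $\event(k) = \invocrev(\objname,f',\methodname)$, and no event of the form $\invocrev(\objname,\cdot,\cdot)$ or $\resolvev(\objname,\cdot,\cdot,\cdot)$ occurs strictly between them in $\trace\upharpoonright_\objname$, then the fields of $\objname$ in $\config(j)$ and $\config(k)$ have identical values. This follows by induction on $k-j$ from inspection of Fig.~\ref{fig:sos}: only rules acting on a subconfiguration $\object{\objname,\future,\store}$ with $\future \neq \bot$ can modify $\objname$'s heap, and between positions $j$ and $k$ the object $\objname$ is inactive by rule \rulename{end} setting the active future to $\bot$ and no subsequent \rulename{start} occurring for $\objname$.

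The $(\Leftarrow)$ direction is immediate because $(\psi \wedge \phi@\objname) \Rightarrow \psi$ in first-order logic, so the translation of the propagated type logically implies that of the original. For $(\Rightarrow)$, assume $\trace\upharpoonright_\objname \models \histype(\project{\Prcl}{}{\objname})$. By Definition~\ref{def:msosem}, there are trace positions witnessing the $\PutAs{\phi}$ and $\receiveTyped{\methodname}{\psi}$ sub-types: a $\resolvev(\objname,\ldots)$ event at some $j$ with $\config(j) \models \phi@\objname$, and a subsequent $\invocrev(\objname,\ldots)$ event at some $k$ with $\config(k) \models \psi@\objname$. By the hypothesis that invocation order coincides across all traces, together with well-typedness forbidding unspecified calls on $\objname$, no other $\invocrev(\objname,\ldots)$ or $\resolvev(\objname,\ldots)$ event occurs between $j$ and $k$. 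Applying the heap-stability lemma, $\objname$'s fields agree in $\config(j)$ and $\config(k)$; since $\phi@\objname$ is a pure $\objname$-formula whose truth depends only on $\objname$'s fields (by the definition of weakening, which existentially quantifies over all non-$\objname$ function symbols), we conclude $\config(k) \models \phi@\objname$, hence the strengthened precondition holds and the propagated translation is satisfied.

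The main obstacle I anticipate is formally justifying the identification of type positions with trace positions when events of other objects are interleaved; the invocation-order hypothesis is precisely what pins this correspondence down and is the reason the lemma is conditional on it. A secondary subtlety is the treatment of weakening: one must argue that $\phi@\objname$ depends only on $\objname$'s footprint, so that preserving $\objname$'s fields between $\config(j)$ and $\config(k)$ preserves its satisfaction---this is exactly the effect of the existential closure over non-$\objname$ symbols in the definition of $\projectphi{\phi}{\objname}$.
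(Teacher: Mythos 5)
Your treatment of propagation rule \rulename{1} is essentially the paper's own argument, worked out in more detail: the paper likewise inducts on the number of $\prop$-applications, observes that the invocation-order hypothesis plus typability forbid any intervening invocation on $\objname$ between the $\resolvev$ of the terminating method and the $\invocrev$ of the next one, and concludes that since no process of $\objname$ can run in that window the state-part $\phi@\objname$ persists. Your explicit heap-stability sub-lemma (an inactive object, i.e.\ one whose active future is $\bot$, cannot have its fields modified until a \rulename{start} fires) and your explicit handling of the trivial $(\Leftarrow)$ direction via $(\psi\wedge\phi@\objname)\Rightarrow\psi$ are both improvements in rigor over the paper's prose.

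The gap is that $\prop^{\ast}$ is the fixpoint of \emph{all} the rewrite rules, and in the full system this includes the repetition rules \rulename{2}--\rulename{5}, which your induction step never considers; the paper's proof devotes four further cases to them. Cases \rulename{3}--\rulename{5} reduce to the arguments for cases \rulename{2} and \rulename{1}, but case \rulename{2} ($\PutAs{\phi}.(\T)^{\ast}_{\psi} \rightsquigarrow \PutAs{\phi\wedge\psi}.(\T)^{\ast}_{\psi}$) needs a genuinely different argument: one must unfold the MSO translation of the Kleene star, identify the first boundary index $i_0$ in the witnessing set $X$, show that the position $i_0-1$ carries the $\resolvev$ event of the preceding termination, and transfer the invariant $\psi$ back across that boundary using the same no-active-process reasoning. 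Your proof as written establishes the lemma only for the core system without loops; to cover the statement as the paper uses it (the corollary extends Theorem~\ref{thm:main} to repetition), you would need to add these cases.
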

\begin{proof}
We fix $\objname$ and denote $\project{\Prcl}{}{\objname}$ with $\LPrcl$.
We show this by induction on the number $n$ of applications of \prop for the fixpoint.
\begin{description}
\item[Induction Base, $n = 0$] Then $\prop^{\ast}(\LPrcl) = \LPrcl$ and the lemma holds trivially.
\item[Induction Step, $n = n' + 1$]
By induction hypothesis there is a type $\LPrcl' = \prop^{n'}(\LPrcl)$ such that the desired property holds. We make a case distinction on the applied case in the definition of \prop in its last application:
\begin{description}
\item[-] \textbf{Case 1} 
$\PutAs{\phi}.\receiveTyped{\methodname}{\psi} \rightsquigarrow \PutAs{\phi}.\receiveTyped{\methodname}{\psi \wedge \phi@\objname}$\\
In this case we have to show that the start of execution of method $\methodname$ the formula $\psi$ as holds. 
Let $\methodname'$ be the method whose termination action $\PutAs{\phi}$ is responsible for $\phi$.
By assumption, the order of invocation events is fixed and the program can be typed. Thus, there is no trace such that between the invocation action of $\methodname'$ and $\methodname$, there is another invocation event.
Thus, each trace $\trace$ that contains pairs of the form $(\invocrev(\objname,\future,\methodname),\config)$, for some $\future,\config$ s.t. $\config \models \psi$ contains this pair as part of a subtrace of the following form:
\[\Big[\big(\resolvev(\objname,\future',\methodname',\expr'),\config'\big),\big(\invocrev(\objname,\future,\methodname),\config\big)\Big]\]
for some $\future',\config$ s.t. $\config \models \phi$. 
Every state change on $\objname$ must be executed by some process on $\objname$, but as there is no such such process
(as there would be an invocation event for it) between the two events in the subtrace, the state-part of $\phi$, i.e. $\phi@\objname$, still holds at the invocation event: $\config \models \phi@\objname$.
This is exactly the condition captured by this propagation case.
\item[-] \textbf{Case 2} 
$\PutAs{\phi}\nnn(\T)^{\ast}_{\psi} \rightsquigarrow \PutAs{\phi \wedge \psi}\nnn(\T)^{\ast}_{\psi}$\\
By the definition of $\histype((\T)^{\ast})$ there is a set of indices $X$ in every trace, such that every such position $i\in X$, the invariant holds and for every pair of consecutive positions $i,j~\in X$,
the subtrace $\trace[i..j]$ satisfies $\histype(\T)$. Now, $\PutAs{\phi}$ is the last event before the repetition, thus before the very first position $i_0\in X$ there is a pair
\[\trace[i_0 - 1] = \big(\resolvev(\objname,\future,\methodname,\expr),\config\big)\]
In $\config$, no process is active at $\objname$. We only regard traces produced $\prgm$, thus $\trace$ is well-formed\footnote{The well-formedness of traces is defined in~\cite{Din14,thesis}} and $i_0$ must be a invocation reaction event
\[\trace[i_0] = \big(\invocrev(\objname,\future',\methodname'),\config\big)\]
Such that $\config \models \psi$. With the same argument as above, the condition at $\psi$ must hold at $i_0 - 1$, as there was no process who could have changed it.
Note, that $i_0 \neq 0$ as every local type starts with a receiving action, not a repetition.
\item[-] \textbf{Case 3} 
$(\T)^{\ast}_{\psi}\nnn\receiveTyped{\methodname'}{\phi} \rightsquigarrow (\T)^{\ast}_{\psi}\nnn\receiveTyped{\methodname'}{\phi\wedge\psi}$ \\ 
This case is analogous to case 2.
\item[-] \textbf{Case 4} 
$(\T)^{\ast}_{\phi}\nnn(\T)^{\ast}_{\psi} \rightsquigarrow (\T)^{\ast}_{\phi\wedge\psi}\nnn(\T)^{\ast}_{\psi}$\\
This case is analogous to case 2.
\item[-] \textbf{Case 5} 
$(\receiveTyped{\methodname'}{\phi}\nnn\T\nnn\PutAs{\phi'})^{\ast}_{\psi} \rightsquigarrow (\receiveTyped{\methodname'}{\phi\wedge\psi}\nnn\T\nnn\PutAs{\phi'\wedge\psi})^{\ast}_{\psi}$\\
By the definition of $\histype((\T)^{\ast})$, there is a set of indices $X$ in every trace, such that every such position $i\in X$, the invariant holds and for every pair of consecutive positions $i,j~\in X$,
If the repetition start with a receiving action and ends with a termination, the chosen positions are those of the termination actions and the first action before (Again, the syntactic form guarantees such a position). This reduces this case to show
that the same propagation as in case 1 holds and thus technical details are analogous to case 1.\qedhere
\end{description}
\end{description}

\end{proof}

\subsection*{Main Theorem}
Given a well-formed global type $\Prcl$ we can say that another (possibly not well-formed) global $\Prcl'$ is a prefix of $\Prcl$ if we can extend $\Prcl'$ to $\Prcl$ by concatenating another global type:
\[\Prcl' \sqsubseteq \Prcl \iff \Prcl = \Prcl'.\G\]
And similarly for local types $\LPrcl$ and traces $\trace$.

The main lemma is similar to subject reduction in non-model-theoretic semantics for types, as it connects types and operational semantics of the language.
It states that each step in the execution preserves the property that the trace so far is a prefix of a a trace which is a model for the type.
\begin{lemma}
Let $\prgm$ be a program and $\Prcl$ a well-formed type with $\vdash \prgm : \Prcl$. 
Every prefix of every trace of $\prgm$ satisfies the translation of a prefix of $\Prcl$:
\[\vdash \prgm : \Prcl \rightarrow \forall \trace.~\prgm\Downarrow\trace \rightarrow \Big(\forall \trace'.~\trace' \sqsubseteq \trace \rightarrow \big(\exists \Prcl'.~ \Prcl'\sqsubseteq \Prcl \wedge \trace' \models \histype(\Prcl')\big)\Big)\]
\end{lemma}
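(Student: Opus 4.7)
The plan is to proceed by induction on the length of the prefix $\trace'$, using the typing derivation to exhibit, at each step, a matching prefix $\Prcl'$ of $\Prcl$ together with a partial model of $\histype(\Prcl')$. The base case, $\trace' = \varepsilon$, is immediate: take $\Prcl' = \kend$, whose translation is $\mathbf{true}$. For the inductive step, suppose $\trace' = \trace'' . (\event,\config)$ where by induction hypothesis there exists $\Prcl'' \sqsubseteq \Prcl$ with $\trace'' \models \histype(\Prcl'')$. Write $\Prcl = \Prcl''.\G$. A case analysis on $\event$ then identifies which head construct of $\G$ is consumed and how $\Prcl'$ extends $\Prcl''$.

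More concretely, I would distinguish the four visible event kinds. For $\event = \invocev(\objname,\objname',\future,\methodname,\many{\expr})$, rule \rulename{call} of the semantics together with the typing rule \rulename{T-Call} applied at the corresponding node in the projection $\project{\Prcl}{}{\objname}$ ensures that the head of $\G$ is of the form $\GInterOn{\objname}{\var}{\objname'}{\methodname\obligation{\phi,\psi}}.\G'$ and that the dynamic logic judgment $\Phi\Rightarrow[\statement;\type~\var=\objname'!\methodname(\many{\expr})]\phi$ established during typing forces $\config \models \phi$ and $\config \models (\objname.\var \doteq \future)$; thus we extend $\Prcl''$ by one calling type prefix. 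Cases \invocrev and \resolvev are similar but additionally use Lemma~1 (propagation): the precondition resp.\ postcondition annotation in $\G$ has been propagated into the method type against which $\statement$ was checked, so the dynamic logic premise of \rulename{T-Object} resp.\ \rulename{T-Return} evaluated along the method body yields $\config \models \projectphi{\phi}{\objname'}$ resp.\ $\config \models \psi$. For $\event = \resolvrev(\objname,\future,\expr)$ the corresponding \ReadAs{\expr} type is matched via \rulename{T-Get}, whose Points-To premise records the edge from the resolving node to the reading node in the causality graph; admissibility (Def.~\ref{def:well} and the admissibility condition in \rulename{T-Main}) then rules out the bad case that $\G$'s head is some earlier pending action still blocked on $\objname$.

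The translation of sequential composition $\histype(\G_1.\G_2)$ relativizes the constraints per object, so to extend $\trace'' \models \histype(\Prcl'')$ to $\trace' \models \histype(\Prcl'')\wedge \text{new atom}$ it suffices to exhibit indices in the extended trace for the existential quantifiers introduced by the new head symbol and to show that these indices lie in the correct object-local sub-order. The local-order side condition is exactly what admissibility of $\caus(\Prcl) + \bigcup E_i$ provides: the absence of cycles guarantees that the scheduler has to serve the process whose projection head is currently pending, so the observed invocation-reaction and resolving events for each object appear in the projected order of $\Prcl$. Together with Lemma~1 this ensures that the relativized constraints used in \rulename{3} of Def.~\ref{def:msosem} are satisfied. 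The loop and branching constructs are handled by extending the inductive invariant to allow $\Prcl''$ to be any $\Prcl' \sqsubseteq \Prcl$ obtained by partially unfolding repetition types (taking the current boundary set $X$ to be the indices of resolved iterations so far) and by selecting one disjunct for branching types; the \rulename{T-While} and \rulename{T-Offer/T-Select} rules provide the required invariant preservation resp.\ guard analysis.

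The main obstacle, as in the propagation lemma itself, is showing that the \emph{local} side conditions extracted by the type system (which use ADL modalities and weakened formulas $\projectphi{\phi}{\objname}$) suffice to establish the \emph{global} constraints in $\histype(\Prcl')$ (which are phrased over arbitrary $\config$). The key insight is that $\histype$ only asserts the weakened formula at each characteristic event, and that the relativization in \rulename{3} confines each object's obligations to events it actually issues. Therefore validity of the ADL formulas along the concrete method bodies, together with the fact that the object-local heap cannot change while no process of the object is active (which underlies Lemma~1), is enough to pick models of the existentials and discharge each atom as it is added. Protocol adherence (Theorem~\ref{thm:main}, part 2) then follows by instantiating $\trace' = \trace$, observing that when $\prgm$ has terminated the entire $\Prcl$ has been consumed (since every $\kend$-projected action at each object has been performed); deadlock freedom (part 1) follows from admissibility by the standard argument of \cite{MHPDead}.
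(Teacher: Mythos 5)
Your proposal is correct and follows essentially the same route as the paper: a subject-reduction-style induction over trace prefixes that matches each execution step against the pending head of the (projected) type, discharges the state annotations via the \ADL premises of the typing rules together with the propagation lemma, and obtains the per-object event order and deadlock freedom from admissibility of the causality graph. The paper itself only sketches this argument (deferring the details to the thesis) and highlights exactly the three ingredients you identify --- assumed conditions holding where they are used, method ordering via the admissibility check in \rulename{T-Main}, and cycle-freedom for deadlock freedom --- so your elaboration is consistent with its proof.
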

The property that the whole execution of the program satisfies the translation of the whole type and not some prefix follows from well-formedness of global types (Theorem 1 in~\cite{thesis}) and deadlock freedom.
The main differences to the proof in~\cite{thesis} are the following:

\begin{description}
\item[All assumed conditions hold at the point they are used] 
We distinguish between the following kinds of assumed conditions:
\begin{itemize}
\item The precondition at method start. The precondition is a conjunction $\phi_1\wedge\phi_2$ where $\phi_1$ is resulting from the projection and $\phi_2$ from the propagation.
That $\phi_2$ holds follows from Lemma~\ref{lem:prop}. That $\phi_1$ holds follows from the fact that the precondition is projected from a formula $\psi$ in the global call, which is fully proven 
by the caller, checked in rule $\rulename{T-Call}$ with the condition that $\psi$ is equal to its projection on the caller. 
\item The selection condition of the passive choice. It must connect that the additional condition executes the statement branch which is typed with the corresponding type branch.
This follows directly from the two additional promises of $\rulename{T-Offer}$ with respect to $\rulename{T-Select}$.
\end{itemize}
\item[Methods are executed in the right order] 
Assume there is a method $\methodname_1$ that is executed on some object $\objname$ before $\methodname_2$ in the type, but executed the other way around in the generated trace.
Then $\methodname_2$ does not depend on $\methodname_1$. But by assumption the program has been typed. Rule $\rulename{T-Main}$ checks, however, that the start of $\methodname_1$ causes $\methodname_2$ in its admissibility check
and this means that $\methodname_2$ cannot be executed before $\methodname_1$ (Lemma 36 in~\cite{thesis}).
\item[Deadlock Freedom] 
A deadlocked configuration is a configuration which is not terminated, yet cannot continue execution.
First we observe that every deadlock is caused by processes blocking at \abs{get} statements. It cannot be a single process, because a process has no access on its own future.
It can also not be stored in the heap between call and execution start, as this would mean that another method was active to store it and this would violate the condition that all methods are executed
in the right order shown above.

Assume there would be a deadlock. W.l.o.g. we assume that only two processes are involved, $p_1$ executing $\methodname_1$ and $p_2$ executing $\methodname_2$. 
If $p_1$ blocks while attempting to read a future belonging to $\methodname_2$ then
the Points-To analysis will include $\methodname_2$ in the set of possible method in rule $\rulename{T-Get}$. As $\methodname_1$ has been type checked, this mean that an edge from the corresponding termination to the corresponding read would be added to 
the causality graph. The same holds for $\methodname_2$. The termination of $\methodname_i$ is in the same object type as the reading type, 
thus there is a path from the read in $\methodname_i$ to the termination in $\methodname_i$.
The resulting graph is pictured below and contains a cycle. The absence of cycles is however checked in rule $\rulename{T-Main}$. For a full formalization of deadlocks through causality graphs, we refer to~\cite{MHPDead}.
\begin{center}
\begin{tikzpicture}[scale=1, every node/.style={scale=1}]
    \node[draw] at (0, 0)          (m1) {$\methodname_1$};
    \node[draw,circle] at (1, 0)   (m2) {\dots};
    \node[draw,circle] at (2, 0)   (m3) {$\uparrow$};
    \node[draw,circle] at (3, 0)   (m4) {\dots};
    \node[draw,circle] at (4, 0)   (m5) {$\downarrow$};

    \node[draw] at (0, 1)          (n1) {$\methodname_2$};
    \node[draw,circle] at (1, 1)   (n2) {\dots};
    \node[draw,circle] at (2, 1)   (n3) {$\uparrow$};
    \node[draw,circle] at (3, 1)   (n4) {\dots};
    \node[draw,circle] at (4, 1)   (n5) {$\downarrow$};

     \draw[->] (m2) -- (m3);
     \draw[->] (m3) -- (m4);
     \draw[->] (m4) -- (m5);

     \draw[->] (n2) -- (n3);
     \draw[->] (n3) -- (n4);
     \draw[->] (n4) -- (n5);

     \draw[->] (n5) -- (m3);
     \draw[->] (m5) -- (n3);

\end{tikzpicture}
\end{center}
\end{description}

\end{document}